\def\squareforqed{\hbox{\rlap{$\sqcap$}$\sqcup$}}
\def\qed{\ifmmode\squareforqed\else{\unskip\nobreak\hfil
\penalty50\hskip1em\null\nobreak\hfil\squareforqed
\parfillskip=0pt\finalhyphendemerits=0\endgraf}\fi}
\def\duzomniejsze{<\kern-.7mm<}
\def\duzowieksze{>\kern-.7mm>}
\def\textbf#1{{\bf #1}}
\def\beq{\begin{equation}}
\def\eeq{\end{equation}}
\def\be{\begin{equation}}
\def\ee{\end{equation}}
\def\ben{\begin{eqnarray}}
\def\een{\end{eqnarray}}
\def\beqa{\begin{eqnarray}}
\def\eeqa{\end{eqnarray}}
\def\eea{\end{array}}
\def\bea{\begin{array}}
\newcommand{\bei}{\begin{itemize}}
\newcommand{\eei}{\end{itemize}}
\newcommand{\bee}{\begin{enumerate}}
\newcommand{\eee}{\end{enumerate}}
\newcommand{\nc}{\newcommand}
\def\>{\rangle}
\def\<{\langle}
\newtheorem{lemma}{Lemma}
\newtheorem{prop}{Proposition}
\newtheorem{theorem}{Theorem}
\newtheorem{dfn}{Definition}
\newtheorem*{definition}{Definition}
\def\bed{\begin{definition}}
\def\eed{\end{definition}}
\def\bel{\begin{lemma}}
\def\eel{\end{lemma}}
\def\bet{\begin{theorem}}
\def\eet{\end{theorem}}
\def\be{\begin{equation}}
\def\ee{\end{equation}}
\begin{document}

\title{Large violations in Kochen Specker contextuality and their applications}
\author{Ravishankar Ramanathan}
\affiliation{Department of Computer Science, The University of Hong Kong, Pokfulam Road, Hong Kong}

\author{Yuan Liu}
\affiliation{Department of Computer Science, The University of Hong Kong, Pokfulam Road, Hong Kong}

\author{Pawe{\l} Horodecki}
\affiliation{International Centre for Theory of Quantum Technologies, University of Gda\'{n}sk, Wita Stwosza 63, 80-308 Gda\'{n}sk, Poland}
\affiliation{Faculty of Applied Physics and Mathematics, National Quantum Information Centre, Gda\'{n}sk University of Technology, Gabriela Narutowicza 11/12, 80-233 Gda\'{n}sk, Poland} 

\begin{abstract}
It is of interest to study how contextual quantum mechanics is, in terms of the violation of Kochen Specker state-independent and state-dependent non-contextuality inequalities.
We present
state-independent non-contextuality inequalities with large violations, in
particular, we exploit a connection between Kochen-Specker proofs and
pseudo-telepathy games to show KS proofs in Hilbert spaces of dimension $d \geq
2^{17}$ with the ratio of quantum value to classical bias being
$O(\sqrt{d}/\log d)$. We study the properties of this KS set and show applications of the large violation. It has been
recently shown that Kochen-Specker proofs always consist of substructures of
state-dependent contextuality proofs called $01$-gadgets or bugs. We show a
one-to-one connection between $01$-gadgets in $\mathbb{C}^d$ and Hardy
paradoxes for the maximally entangled state in $\mathbb{C}^d \otimes
\mathbb{C}^d$. We use this connection to construct large violation $01$-gadgets
between arbitrary vectors in $\mathbb{C}^d$, as well as novel Hardy paradoxes
for the maximally entangled state in $\mathbb{C}^d \otimes \mathbb{C}^d$, and
give applications of these constructions. As a technical result, we show that
the minimum dimension of the faithful orthogonal representation of a graph in
$\mathbb{R}^d$ is not a graph monotone, a result that that may be of
independent interest.
	
\end{abstract}

\maketitle


\section{Introduction}
The Kochen-Specker (KS) theorem \cite{KS} is a central result in quantum foundations, that states that in any Hilbert space $\mathbb{C}^d$ of dimension $d \geq 3$, there exist finite sets of projectors that do not admit of a deterministic non-contextual value assignment. By non-contextual is meant the property that the value assignments to the projectors do not depend on the context, i.e., the other compatible projectors with which they are measured. Quantum contextual correlations have been shown to be a useful resource in several information processing tasks \cite{Howard, HHHH+10, CLMW10}.

The KS theorem is usually proved by showing a finite set of vectors $\mathcal{S} = \{|v_1 \rangle, \dots, |v_n \rangle\} \subset \mathbb{C}^d$ for which no deterministic assignment $f: S \rightarrow \{0,1\}$ is possible satisfying the KS conditions that $(i) \sum_{|v \rangle \in \mathcal{O}} f(|v \rangle) \leq 1$ for every set $\mathcal{O} \subset \mathcal{S}$ of mutually orthogonal vectors, and $(ii) \sum_{|v\rangle \in \mathcal{B}} f(|v \rangle) = 1$ for every complete basis $\mathcal{B} \subset \mathcal{S}$ of $d$ mutually orthogonal vectors. An assignment $f$ satisfying the two conditions is called a $\{0,1\}$-coloring of $\mathcal{S}$ and KS vectors sets are said to be $\{0,1\}$-uncolorable. A generalization of the KS sets due to Renner and Wolf \cite{RW04} called \textit{weak Kochen Specker sets} also serves to prove the KS theorem. A weak KS set \cite{RW04} is a set of (unit) vectors $S \subset \mathbb{C}^d$ such that for any function $f: S \rightarrow \{0,1\}$ satisfying $\sum_{|u \rangle \in b} f(|v \rangle) = 1$ for all orthogonal bases $b \subset S$, there exist two orthogonal unit vectors $|u_1 \rangle \perp |u_2 \rangle \in S$ such that $f(|u_1 \rangle) = f( |u_2 \rangle) = 1$. The quantum contextual correlations from the KS proofs lead to violation of non-contextuality inequalities by any quantum state in $\mathbb{C}^d$ leading to the notion of state-independent contextuality. Smaller sets of projectors can also be found to demonstrate contextuality. For these, the violation of the corresponding non-contextuality inequality occurs only when the projectors are measured on particular states in $\mathbb{C}^d$, we say that such sets give rise to state-dependent contextuality. The simplest (requiring fewest projectors) state-dependent non-contextuality inequalities were introduced in \cite{KCBS08}. Finally, there exist certain sets $\mathcal{S}$ exhibiting state-dependent contextuality of a special type. For these sets, it is possible to find valid $\{0,1\}$-colorings, however in any such coloring, there exist two special non-orthogonal vectors $|v_1\rangle$ and $|v_2\rangle$ that cannot both be assigned the value $1$. This class of special statistical state-dependent KS arguments were introduced by Clifton in \cite{Clifton93}.


An interesting question is to study how contextual quantum mechanics is, when considering such state-independent and state-dependent non-contextuality argument. In \cite{ACC15}, the authors investigated how large contextuality can be in quantum theory for state-dependent non-contextuality inequalities. In these inequalities, the contextuality witness is expressed as a sum $S$ of $n$ probabilities (corresponding to $n$ projectors). The independence number $\alpha(G)$ and the Lovasz-theta number $\theta(G)$ of the corresponsing orthogonality graph are then the maximum values attainable in non-contextual theories, and in quantum theory respectively. The authors made use of a graph-theoretic result by Feige \cite{Feige1} stating that for every $\epsilon > 0$, an $n$-vertex graph $G$ exists such that $\theta(G)/\alpha(G) > n^{1-\epsilon}$ to show that quantum theory allows for maximal contextuality in this case. In \cite{Simmons17}, Simmons studied how contextual quantum mechanics is when considering state-independent non-contextuality inequalities, as measured by the fraction of projectors in the KS proof that must have a valuation that depends on the \emph{context} in which they are measured. 

This paper is organized as follows. We first introduce some notation and elementary concepts, in particular the representation of KS sets as graphs. In the next section, we introduce KS sets with large violations, making use of a connection between Kochen Specker proofs and pseudo-telepathy games from \cite{RW04}. We present an application of the large violation KS sets to finding classical channels where shared entanglement increases the one-shot zero-error capacity. We also show that, despite their foundational interest, the maximum violation of KS state-independent non-contextuality inequalities do not serve to certify intrinsic randomness. To remedy this fault, we study large violations in a class of state-dependent non-contextuality inequalities known as $01$-gadgets in the subsequent section \ref{sec:large-gadget}. Interestingly, we show a technical result here that the minimum dimension of a faithful orthogonal representation of a graph in $\mathbb{R}^d$ is not graph monotone, a result that has fundamental applications in constructions of KS proofs. We derive a one-to-one correspondence between $01$-gadgets in $\mathbb{C}^d$ and two-player Hardy paradoxes for the maximally entangled state in $\mathbb{C}^d \otimes \mathbb{C}^d$. We use this connection to construct novel large violation $01$-gadgets, with an explicit construction shown in Appendix II. Finally, we show that $01$-gadgets are natural candidates for self-testing under the assumption of a fixed dimension, a result which has interesting application for contextuality-based randomness generation. We conclude with some open questions.

\section{Preliminaries}\label{sec:prel}
In this section, we define some graph-theoretic notation useful in the study of KS contextuality.


\paragraph*{Orthogonality Graphs.}
Throughout the paper, we will deal with simple undirected finite graphs $G$, i.e., finite graphs without loops, multi-edges or directed edges. We denote $V(G)$ the vertices of $G$ and $E(G)$ the edges of $G$. 
It is convenient to represent the orthogonality relations in a KS set $\mathcal{S}$ by means of an orthogonality graph $G_{\mathcal{S}}$ \cite{CSW10, CSW14}. In the orthogonality graph, each vector $|v_i\rangle$ in $\mathcal{S}$ is represented by a vertex $v_i$ of $ G_{\mathcal{S}}$ and two vertices $v_1, v_2$ of $G_{\mathcal{S}}$ are connected by an edge if the associated vectors $|v_1 \rangle, |v_2 \rangle$ are orthogonal, i.e. $v_1\sim v_2$ if $\langle v_1 | v_2 \rangle = 0$.

\paragraph*{Orthogonal representations.}
For a given graph $G$, an orthogonal representation $\mathcal{S}$ of $G$ in dimension $d$ is a set of non-zero vectors $\mathcal{S}=\{|v_i \rangle\}$ in $\mathbb{C}^d$ obeying the orthogonality conditions imposed by the edges of the graph, i.e., $v_1 \sim v_2 \Rightarrow \langle v_1|v_2 \rangle=0$ \cite{Lovasz87}. We denote by $d(G)$ the minimum dimension of an orthogonal representation of $G$ and we say that $G$ has dimension $d(G)$. A \emph{faithful} orthogonal representation of $G$ is given by a set of vectors $\mathcal{S}=\{|v_i \rangle\}$ that in addition obey the condition that non-adjacent vertices are assigned non-orthogonal vectors, i.e., $v_1 \sim v_2 \Leftrightarrow  \langle v_1|v_2 \rangle=0$ and that distinct vertices are assigned different vectors, i.e., $v_1 \neq v_2 \Leftrightarrow |v_1 \rangle \neq |v_2 \rangle$.
We denote by $d^*(G)$ the minimum dimension of such a faithful orthogonal representation of $G$ and we say that $G$ has faithful dimension $d^*(G)$.


\paragraph*{KS graphs.}
While the non-$\{0,1\}$-colorability of a set $S$ translates into the non-$\{0,1\}$-colorability of its orthogonality graph $G_\mathcal{S}$, the non-$\{0,1\}$-colorability of an arbitrary graph $G$ translates into the non-$\{0,1\}$-colorability of one of its orthogonal representations only if this representation has the minimal dimension $d(G)=\omega(G)$, where $\omega(G)$ denotes the size of the maximum clique in the graph. 
If a graph $G$ is not $\{0,1\}$-colorable and has dimension $d(G)=\omega(G)$, it  follows that its minimal orthogonal representation $\mathcal{S}$ forms a KS set. If in addition $d^*(G)=\omega(G)$, we say that $G$ is a KS graph (this last condition can always be obtained by considering the faithful version of $G$, i.e., the orthogonality graph $G_\mathcal{S}$ of its minimal orthogonal representation $\mathcal{S}$).


The problem of finding KS sets can thus be reduced to the problem of finding KS graphs. However, deciding if a graph is $\{0,1\}$-colorable is known to be NP-complete \cite{Arends09}. In addition, while finding an orthogonal representation for a given graph can be expressed as finding a solution to a system of polynomial equations, efficient numerical methods for finding such representations are still lacking. Thus, finding KS sets in arbitrary dimensions is a difficult problem towards which a huge amount of effort has been expended \cite{CA96}. In particular, ``records'' of minimal Kochen-Specker systems in different dimensions have been studied \cite{CEG96}, the minimal KS system in dimension four is the $18$-vector system due to Cabello et al. \cite{CEG96, Cab08} while lower bounds on the size of minimal KS systems in other dimensions have also been established.

\section{Large violations of KS state-independent non-contextuality inequalities}
\label{sec:Large-KS}
In \cite{ACC15}, the authors investigated how large contextuality can be in quantum theory for state-dependent non-contextuality inequalities. In these inequalities, the contextuality witness is expressed as a sum $S$ of $n$ probabilities (corresponding to $n$ projectors). The independence number $\alpha(G)$ and the Lovasz-theta number $\theta(G)$ of the corresponsing orthogonality graph are then the maximum values attainable in non-contextual theories, and in quantum theory respectively. The authors made use of a graph-theoretic result by Feige \cite{Feige1} stating that for every $\epsilon > 0$, an $n$-vertex graph $G$ exists such that $\theta(G)/\alpha(G) > n^{1-\epsilon}$ to show that quantum theory allows for maximal contextuality in this case. It must be noted, however that the proof is not constructive and does not single out explicit scenarios.

An open question has remained as to how contextual quantum mechanics is, when considering state-independent non-contextuality inequalities, i.e., inequalities arising from KS proofs or from statistical state-independent KS arguments. In \cite{Simmons17}, Simmons carried out mathematical investigations as to how (maximally) contextual is quantum mechanics, measuring this by the fraction of projectors in the KS proof that must have a valuation that depends on the \emph{context} in which they are measured. An upper bound was derived on this quantity $q(G)$ for arbitrary KS proofs in $n$-dimensional Hilbert space as \cite{Simmons17}
\begin{eqnarray}
q(G) &\leq & \left(1 - \frac{1}{n} \right)^{n-1} - \frac{1}{2^{n-1}} \nonumber \\
 &\leq & 4251920575/11019960576 \approx 0.385838. 
\end{eqnarray}
However, it was shown that there is a gulf between the values achieved by the best-known KS proofs and the upper bound shown above. The Peres-Mermin magic square \cite{Peres, Mermin} achieves a value $q(G) = 0.08\dot{3}$, Cabello's 18-vertex proof \cite{CEG96, Cab08} achieves the value $q(G) = 0.0\dot{5}$, while two-qubit stabiliser quantum mechanics (that forms a KS proof using $60$ projectors and $105$ contexts) achieves the value $q(G) = 0.1$. Note that for the Peres-Mermin magic square, the value for the alternative figure-of-merit is $\theta(G)/\alpha(G) = 6/5 = 1.2$, while the value for Cabello's 18-vertex proof is $\theta(G)/\alpha(G) = 4.5/4 = 1.125$. Here, we show a KS proof for Hilbert spaces of dimension $n \geq 2^{17}$ which achieves $\theta(G)/\alpha(G) \approx 2$. Attempts to find KS proofs that yield a high value of this quantity can be thought of as extensions of the search for small Kochen-Specker sets, to which much attention has been devoted in the literature \cite{CA96, CEG96, Cab08}. Finding KS proofs with large violation is thus an open question of fundamental significance, besides having applications as we shall see.  In this paper, we consider the usual figure-of-merit, namely the ratio of the quantum to the classical bias, and show a KS set with large value for this parameter. More precisely, we show that the corresponding ratio of the quantum bias to the classical bias (defined as the ratio of the quantities normalized such that $\theta(G) = 1$ minus the random assignment value of $1/2$) is $O\left(\sqrt{n}/\log n \right)$, yielding the large violation statement.

\begin{theorem}
In Hilbert spaces of dimension $n \geq 2^{17}$, there exist Kochen Specker proofs consisting of $N = 2^{n-1} + n^2/2$ projectors such that for the corresponding orthogonality graph $G$, it holds that
\begin{eqnarray}
\theta(G) &=& \frac{2^{n-1}}{n} + \frac{n}{2}, \nonumber \\
\alpha(G) &\leq& \left(\frac{1}{2} + \frac{8 \sqrt{n} \log n}{n-1} \right) \left( \frac{2^{n-1}}{n} + \frac{n}{2} \right).
\end{eqnarray}
\end{theorem}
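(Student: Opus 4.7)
The plan is to build the vector set $\mathcal{S}$ from the Renner--Wolf connection \cite{RW04} between Kochen--Specker proofs and pseudo-telepathy games, compute $\theta(G)$ directly from a basis partition, and then attack the bound on $\alpha(G)$ by a combinatorial/spectral estimate on an associated Cayley graph.

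First I would set $n = 2^k$ with $k \ge 17$ and take the vectors that arise in the $n$-dimensional Deutsch--Jozsa-style pseudo-telepathy protocol: measuring the maximally entangled state $|\phi^+\rangle = \frac{1}{\sqrt{n}} \sum_i |i\rangle|i\rangle$ in a sign-vector basis $\{|e_f\rangle = \frac{1}{\sqrt{n}} \sum_i (-1)^{f(i)}|i\rangle\}$ gives perfectly correlated outcomes that no classical deterministic strategy reproduces. The set $\mathcal{S}$ would consist of (a) the $2^{n-1}$ sign vectors $|e_f\rangle$ modulo global sign, which for $n = 2^k$ partition into $2^{n-1}/n$ orthonormal bases via a Sylvester--Hadamard grouping, together with (b) an additional $n/2$ orthonormal bases (a further $n^2/2$ vectors) glued on by the Renner--Wolf recipe so that any $\{0,1\}$-coloring of (a) is forced to violate some basis constraint in (b).

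Second, $\theta(G)$ would follow immediately: the $N = 2^{n-1} + n^2/2$ projectors partition into $B := 2^{n-1}/n + n/2$ orthonormal bases, so summing within each basis gives $\mathbb{I}$ and hence $\sum_v |\langle v|\psi\rangle|^2 = B$ on any state; the same partition supplies a minimum fractional clique cover of $G$, yielding $\theta(G) = B$. KS non-colorability is inherited from the pseudo-telepathy game: a valid $\{0,1\}$-coloring of $\mathcal{S}$ would translate into a deterministic no-signalling winning strategy for the game, contradicting the well-known absence of such a strategy for $n \ge 2^{17}$.

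The main step is the upper bound on $\alpha(G)$. An independent set of $G$ is a collection of pairwise non-orthogonal vectors; the (b)-part contributes at most $n/2$ (one per added basis), and the (a)-part corresponds to a subset $\mathcal{F} \subseteq \{0,1\}^n$ modulo complement in which no two representatives differ in exactly $n/2$ coordinates. I would estimate $|\mathcal{F}|$ via the Hoffman ratio bound applied to the Cayley graph of $\mathbb{Z}_2^n$ with connection set $\{v : |v| = n/2\}$, whose eigenvalues are the Krawtchouk numbers $K_{n/2}(i)$; a Stirling-type asymptotic for $|K_{n/2}(i)|/\binom{n}{n/2}$ followed by a dyadic union bound over $i \in [1, n-1]$ would produce the claimed error term $8\sqrt{n}\log n/(n-1)$. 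The principal obstacle lies precisely in this Krawtchouk estimate, since the extremal eigenvalue is attained in the bulk of the spectrum rather than at an endpoint, so controlling it to the required precision forces a trade-off across indices that is responsible for the $\sqrt{n}\log n$ factor; a secondary technical issue is ensuring that the (b)-bases admit a \emph{faithful} orthogonal realization in $\mathbb{C}^n$ without accidentally orthogonalizing any (a)-vectors beyond the listed incidences, which a small generic perturbation of the completion bases handles while preserving the combinatorial incidence structure used to enforce uncolorability.
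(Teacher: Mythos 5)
Your first two steps broadly track the paper: the paper also builds the set from a pseudo-telepathy game via the Renner--Wolf correspondence and gets $\theta(G)$ from the partition into $2^{n-1}/n + n/2$ orthonormal bases. But the game is the Buhrman--Scarpa--de Wolf \emph{hidden matching} game, not a Deutsch--Jozsa-type game: the ``(b)-part'' is the explicit family of Bob's bases $Q_M$ supported on the pairs of each perfect matching, and these specific cross-orthogonalities with the sign vectors are what encode the game's losing events. They cannot be ``glued on'' generically, and your proposed generic perturbation to enforce faithfulness would destroy exactly the orthogonalities the uncolorability and the $\alpha$-bound rely on.

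The fatal gap is in your main step. Writing $\alpha(G)\le\alpha(G_a)+\alpha(G_b)$ and bounding the sign-vector part $G_a$ by the Hoffman ratio bound cannot yield the factor $1/2$. The induced subgraph $G_a$ partitions into $2^{n-1}/n$ orthonormal bases, so $\vartheta(G_a)=2^{n-1}/n$ exactly (lower bound from $\sum_v\langle v|\tfrac{I}{n}|v\rangle$ with the maximally mixed state, upper bound from the clique cover), and for a regular graph the Hoffman bound is always at least $\vartheta$. Concretely, the most negative eigenvalue of the distance-$n/2$ Cayley graph is $K_{n/2}(2)=-\binom{n}{n/2}/(n-1)$ --- it sits at $i=2$, essentially at the endpoint, not in the bulk as you suggest --- and the ratio bound evaluates to exactly $2^n/n$, i.e.\ $2^{n-1}/n$ after quotienting by complementation: the trivial one-vertex-per-basis bound, short of what you need by precisely the factor $2$. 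No spectral or Lov\'asz-theta-type relaxation of $G_a$ alone can do better, since the quantum strategy saturates it. The $\tfrac12+8\sqrt n\log n/(n-1)$ factor must come from the edges \emph{between} the two parts, which your decomposition discards. That is what the paper does: an independent set hitting a large fraction of the bases would give a deterministic classical strategy for the hidden matching game winning with probability about $\alpha(G)/\theta(G)$, and the KKL/Fourier bound $\omega_c\le\tfrac12+8\sqrt n\log n/(n-1)$ on that game's classical value delivers the stated inequality. To repair your argument you would have to replace the Krawtchouk/Hoffman estimate by such a game-value (or other cross-edge) argument; sharpening the Krawtchouk asymptotics cannot help.
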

\begin{proof}
The proof follows from two crucial ingredients. Firstly, we make use of the hidden-matching game from Buhrman et al. \cite{BSW10} which is non-locality game between two players that can be won with certainty by a quantum strategy using $\log n$ shared EPR-pairs between the players, while any classical strategy has winning probability at most $\frac{1}{2} + O\left(\frac{\log n}{\sqrt{n}} \right)$. Secondly, we make use of a connection between pseudo-telepathy games and weak Kochen-Specker proofs shown by Renner and Wolf in \cite{RW04}. 

The non-local hidden matching game is defined as follows \cite{BSW10}. Let $n$ be a power of $2$ and $M_n$ be the set of all perfect matchings on the set $[n]$. Alice is given as input $x \in \{0,1\}^n$ and Bob is given as input $M \in M_n$, distributed uniformly. Alice’s output is a string $a \in \{0,1\}^{\log n}$ and Bob’s output is an edge $(i, j) \in M$ and a bit string $b \in \{0, 1\}^{\log n}$. They win the game if the following condition is true $(a \oplus b)(i \oplus j) = x_i \oplus x_j$. In \cite{BSW10}, the authors showed that the classical value of the game obeys $\omega_c \leq \frac{1}{2} + O \left( \frac{\log n}{\sqrt{n}} \right) $, while a quantum strategy exists that achives the value $\omega_q = 1$. In other words, the non-local hidden matching game is a pseudo-telepathy game. In \cite{RW04}, Renner and Wolf showed that for every two-player pseudo-telepathy game with the optimal strategy achieved by a maximally entangled state, the union of the sets of projectors measured by the two parties constitutes a (weak) KS proof. 

We first derive the optimal constant in the big-O value for the classical success probability in the game. 
\begin{lemma}
The classical success probability in the non-local hidden matching game is bounded as
\begin{eqnarray}
\omega_c \leq  \left(\frac{1}{2} + \frac{8 \sqrt{n} \log n}{n-1} \right).
\end{eqnarray}
\end{lemma}
\begin{proof}
We follow the argument in \cite{BSW10} using the Kahn-Kalai-Linial (KKL) inequality. The inequality states that for every $\delta \in [0,1]$ and $f: \{0,1\}^n \rightarrow \{-1,0,1\}$ we have
\begin{eqnarray}
\sum_S \delta^{|S|} \hat{f}(S)^2 \leq \left(\text{Pr}[f \neq 0] \right)^{2/(1+ \delta)},
\end{eqnarray}
where $\hat{f}(S)$ denotes the Fourier coefficient. Informally, the inequality says that a $\{-1,0,1\}$-valued function with small support cannot have too much of its Fourier weight on low degrees. The KKL inequality is used to bound the expected bias of $k$-bit parities over a set $A \subseteq \{0,1\}^n$.
Suppose we pick a set $S \subseteq [n]$ of $k$ indices uniformly at random, and consider the parity of the $k$-bit substring induced by $S$ and a uniformly random $x \in A$. 
Intuitively, if $A$ is large then we expect that for most $S$, the bias $\beta_S = \mathbb{E}_{x \in A} \left[\chi_S(x) \right]$ of this parity to be small: the number of $x \in A$ with $\chi_S(x) = 1$ should be roughly the same as with $\chi_S(x) = -1$.
From \cite{Wol08}, we derive 
\begin{eqnarray}
\label{eq:Sbinom}
\sum_{S \in\binom{[n]}{k}} \beta_{s}^{2} \leq \frac{1}{\delta^{k}}\left(\frac{2^{n}}{|A|}\right)^{2 \delta},
\end{eqnarray}
and
\begin{eqnarray}
\label{eq:expbetasq}
E_{s}\left[\beta_{s}^{2}\right]=\frac{1}{\binom{n}{k}} \sum_{S \in\binom{[n]}{k}} \beta_{s}^{2}=O\left(\frac{\log _{2}\left(\frac{2^{n}}{|A|}\right)}{n}\right)^{k},
\end{eqnarray}
where $\delta=\frac{k}{\left[2 \ln \left(\frac{2^{n}}{|A|}\right)\right]}$ minimizes the right hand side of \eqref{eq:Sbinom}. So that
\begin{eqnarray}
\label{eq:sumbetasq}
\sum_{S \in\binom{[n]}{k}} \beta_{s}^{2} &\leq & \frac{1}{\left\{\frac{k}{\left[2 \ln \left(\frac{2^{n}}{|A|}\right)\right]}\right\}^{k}} \left(\frac{2^{n}}{|A|}\right)^{\frac{2k}{\left[2 \ln \left(\frac{2^{n}}{|A|}\right)\right]}} \nonumber \\
&\leq &
\frac{1}{\left\{\frac{k}{\left[2 \log_2 \left(\frac{2^{n}}{|A|}\right)\right]}\right\}^{k}} \left(\frac{2^{n}}{|A|}\right)^{\frac{2k}{\left[2 \log_2 \left(\frac{2^{n}}{|A|}\right)\right]}}.
\end{eqnarray}

For the hidden matching problem we have $k=2$. If Alice sends $c$ bits to Bob, $\frac{|A|}{2^{n}}=2^{-c}$.

From \eqref{eq:sumbetasq}, we have 
\begin{eqnarray}
\label{eq:normsumbeta}
\frac{1}{\binom{n}{2}}\sum_{S \in\binom{[n]}{2}} \beta_{s}^{2} & \leq &
\frac{1}{\binom{n}{2}}\frac{1}{\left\{\frac{2}{\left[2 \log_2 \left(2^c\right)\right]}\right\}^{2}} \left(2^c\right)^{\frac{2\times 2}{\left[2 \log_2 \left(2^c\right)\right]}} \nonumber \\
&=& \frac{1}{\frac{n(n-1)}{2}} \frac{1}{\left(\frac{2}{2 c}\right)^{2}}\left(2^{c}\right)^{\frac{2 \times 2}{2 c}} \nonumber \\
&=&\frac{8 c^{2}}{n(n-1)}.
\end{eqnarray}

From \eqref{eq:expbetasq}, we have 
\begin{eqnarray}
\label{eq:bigObeta}
\frac{1}{\binom{n}{2}} \sum_{S \in\binom{[n]}{2}} \beta_{s}^{2}=O\left(\frac{\log _{2}\left(2^c\right)}{n}\right)^{2}=O\left(\frac{c}{n}\right)^2.
\end{eqnarray}

Based on \eqref{eq:normsumbeta} and \eqref{eq:bigObeta}, we deduce that the constant $\alpha$ obeys
\begin{eqnarray}
\alpha\geq\frac{8 n}{n-1}.
\end{eqnarray}
Thus, we have $\omega_c \leq  \left(\frac{1}{2} + \frac{8 \sqrt{n} \log n}{n-1} \right)$.
\end{proof}

Now, following the quantum protocol of the non-local hidden matching problem, we can construct a Kochen-Specker set. The orthogonal bases pair of $\mathcal{H}=\mathcal{H}_{n} \otimes \mathcal{H}_{n} $is $(P_x,Q_M)$, where the $x$ is the input of Alice and $M$ is the input of Bob. The set of projectors measured by Alice is $P = \cup_x P_x$ and the corresponding set for Bob is $Q = \cup_M Q_M$. The KS vector set is then given as $S = P \cup Q$.

Let $H^{\prime}=H^{\otimes \log n}$, the vector $P_x^i$ in basis $P_x$ is given as:
$$
P_x^i=U_{x} H^{\otimes \log n}|i\rangle=U_{x}\left(\begin{array}{c}
H_{1 i}^{\prime} \\
H_{2 i}^{\prime} \\
\vdots \\
H_{n i}^{\prime}
\end{array}\right)=
\left(\begin{array}{c}
(-1)^{x_{1}+1 \cdot i} \\
(-1)^{x_{2}+2 \cdot i} \\
\vdots \\
(-1)^{x_{n}+n \cdot i}
\end{array}\right)
$$
where $i\in \{0,1\}^{\log n}$  and $U_{x}=\left(\begin{array}{llll}
(-1)^{x_{1}} & & & \\
& (-1)^{x_{2}} & & \\
& & \ddots & \\
& & & (-1)^{x_{n}}
\end{array}\right)$ is an $x$-dependent phase-flip matrix, and $x_i$ is the $i-th$ bit of the binary bit string $x$. For different vectors $P_x^i,P_x^j$ ($i\neq j \in \{0,1\}^{\log n}$) in the basis $P_x$ we have 
$$\langle P_x^i|P_x^j\rangle=\left(\begin{array}{c}
(-1)^{x_{1}} H_{1 i}^{\prime} \\
(-1)^{x_{2}} H_{2 i}^{\prime} \\
\vdots \\
(-1)^{x_{n}} H_{n i}^{\prime}
\end{array}\right)^{T}\left(\begin{array}{c}
(-1)^{x_{1}} H_{1 j}^{\prime} \\
(-1)^{x_{2}} H_{2 j}^{\prime} \\
\vdots \\
(-1)^{x_{n}} H_{n j}^{\prime}
\end{array}\right)$$
$$=(-1)^{2 x_{1}} H_{1 i}^{\prime} H_{1 j}^{\prime}+(-1)^{2 x_{2}} H_{2 i}^{\prime} H_{2 j}^{\prime}+\cdots+(-1)^{2 x_{n}} H_{n i}^{\prime} H_{n j}^{\prime} $$
$$=
H_{1 i}^{\prime} H_{1 j}^{\prime}+H_{2 i}^{\prime} H_{2 j}^{\prime}+\cdots+H_{n i}^{\prime} H_{n j}^{\prime}=0.
$$
Now $P_x$ is a complete basis with $n$ orthogonal vectors and
$P=\bigcup_{x} P_{x}$.


The vector $Q_M^{k,(i,j)}$ in basis $Q_M$ is given as:
$$
Q_{M}^{k,(i, j)}=P_{i j} H^{\otimes \log n}|k\rangle=P_{i j}\left(\begin{array}{c}
H_{1 k}^{\prime} \\
H_{2 k}^{\prime} \\
\vdots \\
H_{n k}^{\prime}
\end{array}\right)=\left(\begin{array}{c}
0 \\
\vdots \\
H_{i k}^{\prime} \\
0 \\
\vdots \\
H_{j k}^{\prime} \\
0\\
\vdots \\
\end{array}\right)
$$
where $k\in \{0,1\}^{\log n}$ and $(i,j)$ is a disjoint pair of the perfect matching $M$ and $P_{i,j}$ is a diagonal matrix with $(i,i)$ and $(j,j)$-th entries equal to $1$, and the rest of the entries being $0$.
Since the elements in $H^{\prime}=H^{\otimes \log n}$ must be $1$ or $-1$, given the determined matching $M$ and a pair $(i,j)$ with different $k$, there are just two distinct vectors: $Q_M^{(i,j)}$ with a $1$ at positions $i$ and $j$ and $0$ elsewhere, and $Q_{M}^{'(i,j)}$ with $1$ at position $i$, $-1$ at position $j$ and $0$ elsewhere.


Given a determined perfect matching $M$, the $\frac{n}{2}$ pairs $(i,j)$ are all disjoint in the matching, i.e., if $(i,j)$ and $(i',j')$ are pairs in $M$, then $i\neq j\neq i'\neq j'$. We obtain that 
\begin{eqnarray}
\langle Q_M^{(i,j)}|Q_M^{(i',j')}\rangle=0, \; \; \langle Q_M^{(i,j)}|Q_M^{'(i',j')}\rangle=0, \nonumber \\
\langle Q_M^{'(i,j)}|Q_M^{(i',j')}\rangle=0, \; \; \langle Q_M^{'(i,j)}|Q_M^{'(i',j')}\rangle=0.
\end{eqnarray}
Now $Q_M$ is a complete basis with $n$ orthogonal vectors and
$Q=\bigcup_{M} Q_{M}$. The KS vector set is then given by $S= P\cup Q$.
\end{proof}
\subsection{Application I. Entanglement assisted one-shot zero-error capacity of a classical channel }

For a classical channel $\mathcal{N}$ connecting the the sender Alice and receiver Bob, the behaviour of the channel is described by the conditional probability distribution over outputs given the input. Two inputs of the channel $\mathcal{N}$ are confusable if their outputs overlap. Shannon introduced the confusability graph $G(\mathcal{N})$ of a classical channel $\mathcal{N}$: the inputs are expressed as vertices in the graph, two vertices are connected by an edge if and only if they are confusable. Classically, the maximum number of different messages Alice can send to Bob without error through the classical channel $\mathcal{N}$ is the independence number of the confusability graph $G(\mathcal{N})$. In \cite{CLMW10}, Cubitt et al. showed that given single use of a channel based on certain proofs of the KS theorem, entangled states of a system shared by the sender and receiver can be used to increase the number of (classical) messages which can be sent with no chance of error. In other words, for these KS channels, one can improve the zero-error classical communication capacity using entanglement. 

We now need to verify that the classical channel $\mathcal{N}$ constructed from our KS graph $G$ satisfies the conditions on the channel imposed by Cubitt et al.'s proof. If so, we can show an example of a classical channel $\mathcal{N}$ for which the entanglement-assisted classical zero-error capacity far exceeds the classical zero-error capacity without shared entanglement. Firstly, we construct the classical channel $\mathcal{N}$ using our KS graph $G$. The classical one-shot zero-error capacity of the corresponding channel is:
\begin{eqnarray}
c_{0}(\mathcal{N})=\alpha(G)=\frac{2^{(n-1)}+\frac{n^{2}}{2}}{n} \left(\frac{1}{2}+O\left(\frac{\log n}{\sqrt{n}}\right)\right).
\end{eqnarray}

Now, the condition that must be satisfied by the KS graph $G$ is that the graph can be partitioned into an integral number of disjoint cliques of size $n$. We show in Lemma \ref{lem:cliques} that our KS graph $G$ consists of $q=2^{(n-1)}+\frac{n^{2}}{2}$ disjoint cliques of size $n$.
This implies that the classical one-short zero-error capacity of the corresponding channel can be increased when the sender and receiver share a maximally entangled state $\rho_{AB}$ of rank $n$.
The entanglement-assisted zero-error capacity $c_{SE}$ can be shown to be exactly the number of the disjoint cliques:
\begin{eqnarray}
c_{SE}(\mathcal{N})=q=\frac{2^{(n-1)}+\frac{n^{2}}{2}}{n}.
\end{eqnarray}
To see this, note the protocol for the task outlined in the proof by Cubitt et al. \cite{CLMW10}. In order to send a message $m \in q$, Alice can choose a projector $j$ in clique $m$ randomly and measure her side of the state. She then inputs $(m,j)$ into the channel. Bob's output will be a subset containing $(m,j)$ and its orthogonal vertices. After performing a projective measurement on his side of the state, he can infer which message $m \in q$ Alice has sent with certainty. In other words, the protocol works perfectly when the KS graph $G$ can be partitioned into an integral number of disjoint cliques of size $n$.

\begin{lemma}
\label{lem:cliques}
The KS graph $G$ obtained from the non-local hidden matching problem can be partitioned into 
\begin{equation}
q=\frac{2^{(n-1)}}{n}+\frac{n}{2}.
\end{equation}
disjoint cliques of size $n$.
\end{lemma}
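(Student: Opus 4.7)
The plan is to split the vertex set as $V(G) = P \sqcup Q$ with $P = \bigcup_x P_x$ and $Q = \bigcup_M Q_M$, and then partition each side into cliques separately. First, I verify that $P \cap Q = \emptyset$: every Alice-projector is supported on all $n$ coordinates with entries $\pm 1$, while every Bob-projector $(e_i \pm e_j)/\sqrt{2}$ has exactly two nonzero coordinates, so (since $n \geq 4$) no projector lies in both.

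To partition $P$, I use a group-theoretic argument. Identifying $(\{\pm 1\}^n,\cdot)$ with $(\mathbb{F}_2^n,+)$, the $n$ columns of $H' = H^{\otimes \log n}$ form a subgroup $\tilde H \leq \mathbb{F}_2^n$ of order $n$, namely the image of the linear map $k \mapsto (l\cdot k)_{l\in [n]}$ from $\mathbb{F}_2^{\log n}$ to $\mathbb{F}_2^n$. Each basis $P_x = U_x \tilde H$ is a coset of $\tilde H$, so the collection $\{P_x\}$ partitions $\mathbb{F}_2^n$ into $2^n/n$ cosets. Passing to the projective quotient, which identifies $x$ with $x + \mathbf{1}$, and checking that $\mathbf{1} = (1,\ldots,1) \notin \tilde H$, the enlarged subgroup $\tilde H + \langle \mathbf{1} \rangle$ has order $2n$ and index $2^{n-1}/n$. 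Its cosets are the $2^{n-1}/n$ pairwise-disjoint projective bases, each an orthogonal $n$-clique in $G$.

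For $Q$, I pick $n/2$ pairwise edge-disjoint perfect matchings $M_1,\dots,M_{n/2}$ of $[n]$; these can be extracted as a sub-family of a $1$-factorization of $K_n$, which exists because $n$ is a power of $2$. Each $Q_{M_i}$ is a basis of $n$ mutually orthogonal projectors $\{(e_a \pm e_b)/\sqrt{2} : (a,b)\in M_i\}$, and two such bases $Q_{M_i}, Q_{M_j}$ are disjoint precisely when $M_i$ and $M_j$ share no edge. The chosen family thus yields $n/2$ pairwise disjoint $n$-cliques, accounting for $n^2/2$ projectors in $Q$. Combining the two sides produces $q = 2^{n-1}/n + n/2$ pairwise disjoint cliques of size $n$ that together exhaust $V(G)$.

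The main obstacle is the $Q$-side accounting: the hidden-matching game as stated indexes Bob's bases by \emph{all} $(n-1)!!$ perfect matchings of $[n]$, which would overcount the projectors of the form $(e_i \pm e_j)/\sqrt 2$. The delicate step is therefore to justify restricting Bob's input to an edge-disjoint family of size $n/2$, equivalently to verify that both the quantum value $\omega_q = 1$ and the classical value bound derived earlier persist after this restriction, so that the counts $N = 2^{n-1} + n^2/2$ and $q = 2^{n-1}/n + n/2$ come out exactly as asserted.
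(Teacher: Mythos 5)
Your proof is correct and follows the same decomposition as the paper's: split $V(G)$ into Alice's vectors $P$ and Bob's vectors $Q$, show the $P$-bases are pairwise disjoint or identical, show the $Q$-bases are pairwise disjoint, and count. On the Alice side your coset argument ($P_x$ is a coset of the order-$n$ subgroup $\tilde H$ formed by the columns of $H^{\otimes \log n}$, enlarged by $\mathbf{1}$ to an order-$2n$ subgroup whose $2^{n-1}/n$ cosets are the projective bases) is a cleaner repackaging of the paper's argument, which reaches the same conclusion by an explicit index computation showing that $P_x^i = \pm P_{x'}^j$ forces $p_r = i_r \oplus d_r$ and hence $P_x = P_{x'}$ as projective bases; the underlying idea is identical. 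On the Bob side, the ``delicate step'' you flag at the end is resolved in the paper not by re-deriving the game values but by definition: Bob's input is drawn only from the explicit family of $n/2$ matchings $M_k$ with pairs $j=\frac{n}{2}+1+(i+k-1 \bmod \frac{n}{2})$ for $i \le \frac{n}{2}$, i.e.\ the cyclic-shift $1$-factorization of the complete bipartite graph between the two halves of $[n]$. These are pairwise edge-disjoint, so the $Q_M$ are pairwise disjoint $n$-cliques, exactly as in your argument; the vertex count $N = 2^{n-1}+n^2/2$ in Theorem~1 already presupposes this restricted family, so no further justification is needed for the partition lemma itself (your point that the classical bound should be re-checked for the restricted family is a fair observation about Theorem~1, but is outside the scope of this lemma).
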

\begin{proof}
To show this, we show the equivalent statement that in the optimal quantum strategy for the non-local hidden matching game, the bases measured by the two parties do not have any overlap. In other words, for any pair of bit strings $x \neq x'$, if one of the vectors $P_x^i=\pm P_{x'}^j$, then the bases $P_x=P_{x'}$. This means that if one of the vectors $P_x^i=\pm P_{x'}^j$, then for any other vector $P_x^p$ in basis $P_x$, we can find a corresponding vector $P_{x'}^q$ in basis $P_{x'}$, such that $P_x^p=\pm P_{x'}^q$. Given $P_x^i=\pm P_{x'}^j$, we have 
that for $\forall s\in \{1,2,\cdots,n\}$, $(-1)^{x_{s}+s \cdot i}=\pm(-1)^{x'_{s}+s \cdot j}$.

This implies that for any other vector $P_x^p$ in basis $P_x$, we have 
\begin{eqnarray}
(-1)^{x_{s}+s \cdot p}&=&\pm (-1)^{x_{s}+s \cdot i+s\cdot d} \nonumber \\
&=&\pm (-1)^{x'_{s}+s \cdot j+s\cdot d} \nonumber \\
&=&\pm (-1)^{x'_{s}+s \cdot q}.
\end{eqnarray}
Writing $s,p,i,d,q$ as $\log n$-bit binary strings, we obtain that
\begin{eqnarray}
s\cdot p= \sum_{r=1}^{\log n} s_{r} \cdot p_{r}(\bmod 2)&=&\sum_{r=1}^{\log n} s_{r} \cdot i_{r}+\sum_{r=1}^{\log n} s_{r} \cdot d_{r}(\bmod 2) \nonumber \\
&=&\sum_{r=1}^{\log n} s_{r} \cdot(i_r+d_r)(\bmod 2)
\end{eqnarray}
so that
\begin{eqnarray}
p_r=(i_r+d_r)(\bmod 2)=i_r\oplus d_r.
\end{eqnarray}
Similarly, we have $q_r=j_r\oplus d_r=j_r\oplus p_r\oplus i_r$. We have thus found the exact vector $P_{x'}^q$ in basis $P_{x'}$ , such that $P_x^p=\pm P_{x'}^q$. Therefore, for given different bit strings $x$ and $x'$, the bases $P_{x}$ and $P_{x'}$ will never overlap.

Now, in the hidden matching problem, Bob's matching is chosen uniformly from $$M\in \left\{M_{k} \mid k \in\{0, \ldots, \frac{n}{2}-1\}\right\}$$ And the pairs $(i,j)$ in matching $M_{k}$ are given by:
\begin{eqnarray}
j=\frac{n}{2}+1+(i+k-1\bmod \frac{n}{2}), \; \; \; \; i \leq \frac{n}{2}. 
\end{eqnarray}
Therefore, the pair $(i,j)$ will never repeat in different matchings $M$ and $M'$, so that the bases $Q_M$ and $Q_{M'}$ never overlap. Having seen that the bases of Alice and Bob do not overlap, we infer that the total number of disjoint bases in the KS set $S$ is the total number of questions of Alice and Bob, which gives $q= \frac{2^{n-1}}{n}+\frac{n}{2}$. Since each basis includes $n$ orthogonal vectors, we also deduce that the total number of vectors in the KS set $S$ is $2^{n-1}+\frac{n^2}{2}$.




%

\end{proof}

\subsection{Application II. KS State-Independent Contextuality does not certify intrinsic randomness}
The result in the previous subsection has foundational significance and has some very interesting applications. Surprisingly, an application that one may anticipate from every contextual (and non-local) behavior, namely the certification of randomness cannot be obtained from the violation of any KS state-independent non-contextuality inequality. This comes from the following curious observation. 

Consider for the sake of concreteness the KS proof known as the Peres-Mermin (PM) square, consisting of nine binary observables $\{\sigma_x \otimes \mathds{1}, \mathds{1} \otimes \sigma_x, \sigma_x \otimes \sigma_x, \mathds{1} \otimes \sigma_z, \sigma_z \otimes \mathds{1}, \sigma_z \otimes \sigma_z, \sigma_x \otimes \sigma_z, \sigma_z \otimes \sigma_x, \sigma_y \otimes \sigma_y \}$, where $\sigma_x, \sigma_y \sigma_y$ refer to the usual Pauli observables. The commutation hypergraph of these observables consists of three rows and three columns, with the quantum mechanical predictions 
\begin{eqnarray}
\langle \left(\sigma_x \otimes \mathds{1} \right) \left( \mathds{1} \otimes \sigma_x  \right) \left( \sigma_x \otimes \sigma_x \right) \rangle &=& 1, \nonumber \\
\langle \left(\mathds{1} \otimes \sigma_z \right) \left( \sigma_z \otimes \mathds{1} \right) \left(\sigma_z \otimes \sigma_z \right) \rangle &=& 1, \nonumber \\
 \langle \left( \sigma_x \otimes \sigma_z \right) \left(\sigma_z \otimes \sigma_x  \right)  \left(  \sigma_y \otimes \sigma_y   \right)  \rangle &=& 1, \nonumber \\
\langle \left(\sigma_x \otimes \mathds{1}   \right) \left(\mathds{1} \otimes \sigma_z \right) \left( \sigma_x \otimes \sigma_z \right) \rangle &=& 1, \nonumber \\
\langle  \left( \mathds{1} \otimes \sigma_x  \right) \left( \sigma_z \otimes \mathds{1} \right)  \left(\sigma_z \otimes \sigma_x  \right)  \rangle &=& 1, \nonumber \\ 
\langle  \left( \sigma_x \otimes \sigma_x \right) \left(\sigma_z \otimes \sigma_z \right)  \left(  \sigma_y \otimes \sigma_y   \right) \rangle &=& -1.
\end{eqnarray}
On the other hand, denoting the observables in general as $\{A, B, C, a, b, c, \alpha, \beta, \gamma \}$ the maximum value in non-contextual theories of the expression
\begin{eqnarray}
\langle A B C \rangle + \langle a b c \rangle + \langle \alpha \beta \gamma \rangle + \langle A a \alpha \rangle + \langle B b \beta \rangle - \langle C c \gamma \rangle 
\end{eqnarray}
is $4$. Since non-contextual theories comprise (mixtures of) all deterministic behaviors, one may expect to certify randomness when a honest user observes the value of the expression to be equal to $6$ as predicted by quantum mechanics. 

However, this is not the case under the general paradigm of randomness certification \cite{}. In this paradigm, the observable(s) from which the honest user intends to extract the randomness (the hashing function $h$) is announced beforehand and is assumed to be known to the adversary, the reason being that no a priori private randomness is available to randomise the choice of hashing function. Now, the maximum quantum value of the state-independent non-contextuality inequality (the value $6$ in the example of the PM square) is achievable by the maximally mixed state (the state $\frac{\mathds{1}}{2} \otimes \frac{\mathds{1}}{2}$ in the PM square). These facts together imply that, irrespective of the choice of hashing function $h$ used by the honest party, a (contextual) behavior can be found that achieves the same maximum value of the inequality while also having the property that the output of the hashing function is deterministic. Indeed, this contextual behavior can be obtained by simply performing the measurements on the eigenstate of the hash observable. In the concrete example of the PM square, if the honest user chooses to extract the randomness from the observable $\sigma_x \otimes \sigma_x$, then there exists a contextual behavior (obtained by performing the PM measurements on the state $| + \rangle \otimes |+ \rangle$ where $|+ \rangle = \frac{1}{\sqrt{2}} \left( | 0 \rangle + | 1 \rangle \right)$) that achieves the value $6$ for the PM expression and such that the outcome of measurement of $\sigma_x \otimes \sigma_x$ is deterministic. It can be readily seen that the same observation extends to any arbitrary state-independent non-contextuality inequality, so that KS state-independent contextuality does not certify any randomness. Indeed, the state-independent inequalities are as yet unique in this respect. It would be interesting to see if there is any other non-contextuality or Bell inequality with the property that its violation does not certify any randomness (even against a quantum adversary as considered here).

\section{Large violations in $01$-gadgets and applications.}
\label{sec:large-gadget}
In contrast to state-independent non-contextuality inequalities, state-dependent inequalities in general allow for certification of intrinsic quantum randomness, for instance see the protocols in \cite{RBHH+15, BRGH+16, WBGH+16}. In particular, the state-dependent inequalities from measurement scenarios known as $01$-gadgets \cite{RRHP+20} are especially useful in 'localising' the value-indefiniteness that is guaranteed by the violation of a non-contextuality inequality \cite{ACS15, ACCS12}. The $01$-gadgets are $\{0,1\}$-colorable and thus do not represent by themselves KS sets. However, they do not admit arbitrary $\{0,1\}$-coloring: in any $\{0,1\}$-coloring of a $01$-gadget, there exist two special non-orthogonal vectors $|v_1\rangle$ and $|v_2\rangle$ that cannot both be assigned the value $1$. 

The $01$-gadgets or 'bugs' were first introduced as a means of constructing statistical KS arguments by Clifton \cite{Clifton93} (see Fig. \ref{fig:Clifton}) and have since been studied in the literature. In particular, $01$-gadgets were also used in \cite{Arends09} to show that the problem of checking whether certain families of graphs (which represent natural candidates for KS sets) are $\{0,1\}$-colorable is NP-complete. Specific $01$-gadgets have already been studied in the literature, for instance as 'definite prediction sets' in \cite{CA96} and recently as 'true-implies-false sets' in \cite{APSS18} where also minimal constructions in several dimensions were explored. 
Recently, some of us showed that $01$-gadgets form a fundamental primitive in constructing KS proofs, in the sense that every KS set contains a $01$-gadget and from every $01$-gadget one can construct a KS set.

\begin{figure}[t] 
\label{fig:Clifton}
\centering
\includegraphics[width=10cm]{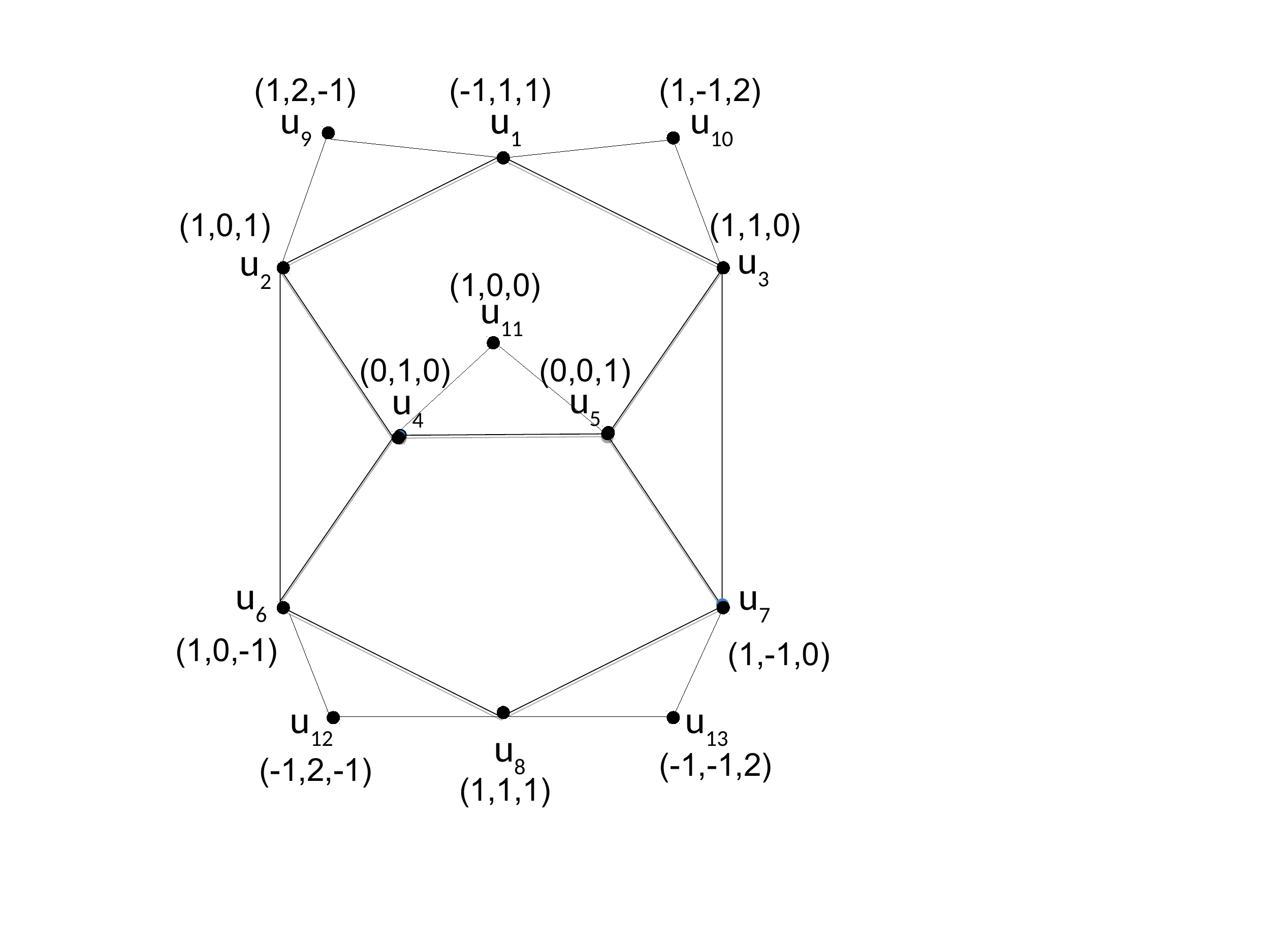}
 \caption{The original Clifton bug introduced in \cite{Clifton93}.}
\end{figure}

%

\begin{dfn}
	A $01$-gadget in dimension $d$ is a $\{0,1\}$-colorable set $\mathcal{S}_\text{gad}\subset\mathbb{C}^d$ of vectors containing two distinguished vectors $|v_1\rangle$ and $|v_2\rangle$ that are non-orthogonal, but for which $f(|v_1\rangle)+f(|v_2\rangle)\leq 1$ in every $\{0,1\}$-coloring $f$ of $\mathcal{S}_\text{gad}$.
\end{dfn}
In other words, while a $01$-gadget $\mathcal{S}_\text{gad}$ admits a $\{0,1\}$-coloring, in any such coloring the two distinguished non-orthogonal vertices cannot both be assigned the value $1$ (as if they were actually orthogonal).
We can give an equivalent, alternative definition of a gadget as a graph.
\begin{dfn}
	A $01$-gadget in dimension $d$ is a $\{0,1\}$-colorable graph $G_\text{gad}$ with faithful dimension $d^*(G_\text{gad})=\omega(G_\text{gad})=d$ and with two distinguished non-adjacent vertices $v_1 \nsim v_2$ such that $f(v_1)+f(v_2)\leq 1$ in every $\{0,1\}$-coloring $f$ of $G_\text{gad}$.
\end{dfn}
In the following when we refer to a $01$-gadget, we freely alternate between the equivalent set or graph definitions.

An example of a $01$-gadget in dimension 3 is given by the following set of 8 vectors in $\mathbb{C}^3$:
\begin{eqnarray}
	\label{eq:Clif-orth-rep}
&&	| u_1 \rangle = \frac{1}{\sqrt{3}}(-1,1,1), \; \; |u_2 \rangle = \frac{1}{\sqrt{2}}(1,1,0), \nonumber \\ 
&& |u_3 \rangle = \frac{1}{\sqrt{2}} (0,1,-1), |u_4 \rangle = (0,0,1), \nonumber \\
&&	|u_5 \rangle = (1,0,0), \; \; |u_6 \rangle = \frac{1}{\sqrt{2}}(1,-1,0), \nonumber \\
&& |u_7 \rangle = \frac{1}{\sqrt{2}}(0,1,1), \; \; |u_8 \rangle = \frac{1}{\sqrt{3}}(1,1,1), 
	\end{eqnarray}
where the two distinguished vectors are $|v_1\rangle=|u_1\rangle$ and $|v_2\rangle=|u_8\rangle$. Its  orthogonality graph is represented in Fig.~\ref{fig:Clifton}. It is easily seen from this graph representation that the vertices $u_1$ and $u_8$ cannot both be assigned the value 1, as this then necessarily leads to the adjacent vertices $u_4$ and $u_5$ to be both assigned the value 1, in contradiction with the $\{0,1\}$-coloring rules. This graph was identified by Clifton, following work by Stairs \cite{Clifton93, Stairs}, and used by him to construct statistical proofs of the Kochen-Specker theorem. We will refer to it as the Clifton gadget $G_{\text{Clif}}$. The Clifton gadget and similar gadgets were termed ``definite prediction sets" in \cite{CA96}. 


In this section, we show constructions of $01$-gadgets that achieve large violations in the sense that the overlap between the special vectors $|v_1\rangle$ and $|v_2\rangle$ can be made arbitrary. We also show constructions that achieve the self-testing property that under the constraint of a fixed dimension, there is a unique orthogonal realization (up to rotations) of the $01$-gadget. We apply our constructions to show interesting novel applications of the $01$-gadgets. We also show a property of the faithful representations that underscores how difficult it is to construct novel KS proofs and $01$-gadgets.

\subsection{The minimum dimension of a faithful orthogonal representation in $\mathbb{R}^d$ is not graph monotone}

In graph theory, a graph property$P$ is said to be monotone if every subgraph of a graph with property $P$ also has property $P$. In other words, the graph property is closed under removal of edges and vertices. Recall that a faithful orthogonal (also orthonormal, since all vectors are taken to have unit norm) representation of $G$ is given by a set of vectors $S = \{ | v_i \rangle \}$ in $\mathbb{C}^d$ that obey the orthogonality conditions imposed by the edges of the graph, and in addition obey the condition that non-adjacent vertices are assigned non-orthogonal vectors and that distinct vertices are assigned different vectors. We had denoted by $d^*(G)$ the minimum dimension of such a faithful orthogonal representation of $G$. Let us now denote by $d_R^*(G)$ the corresponding minimum dimension when $\mathbb{C}^d$ is replaced by $\mathbb{R}^d$ in the above definition. That is, $d_R^*(G)$ is the minimum dimension in real vector spaces of a faithful orthogonal representation of the graph $G$.

Given a graph $G$ that has a faithful orthogonal representation in dimension $d$, let us form a new graph $G \cup uv$ by adding an edge $uv$ (if such is possible) to $G$. In general, we expect that the minimum dimension of the orthogonal representation increases by this operation of adding edges, i.e., that $d_R^*(G \cup uv) > d_R^*(G)$. Conversely, consider the operation of deleting an edge $uv$ (if such is possible) of $G$. In general, we expect $d_R^*(G \setminus uv) \leq d_R^*(G)$. The question we address in this section is whether this holds always, i.e.,  
\begin{itemize}
\item Is the graph-property $P^{d,n}$ of all the graphs on $n$ vertices which admit a faithful orthogonal representation in $\mathbb{R}^d$ monotone-decreasing?
\end{itemize}
Surprisingly, we show that the answer to this question is negative. Our proof is constructive, we give an explicit example of a graph $G$ which has a faithful orthogonal representation in $\mathbb{R}^3$, and yet deleting an edge $uv \in E(G)$ increases the minimum dimension of the faithful orthogonal representation of the resulting graph, i.e., $d_R^*(G \setminus uv) > 3$. 

Given that one may readily discover candidate non-$\{0,1\}$-colorable graphs or candidate $01$-gadget graphs, see for instance \cite{AM78, AM80}, a large part of the difficulty in constructing KS proofs and $01$-gadgets is in finding minimum dimensional faithful orthogonal representations of such candidate graphs. The surprising property that deleting edges does not retain the dimension of the representation and that in some cases one may have to increase the dimension of the representation after this operation, is thus an important discovery in the research project aimed at constructing minimal KS proofs and $01$-gadgets. It also underscores the importance of the alternative construction methods presented in the rest of the paper.

\begin{figure}[t] 
\label{fig:gadg-sixty}
\centering
\includegraphics[width=7cm]{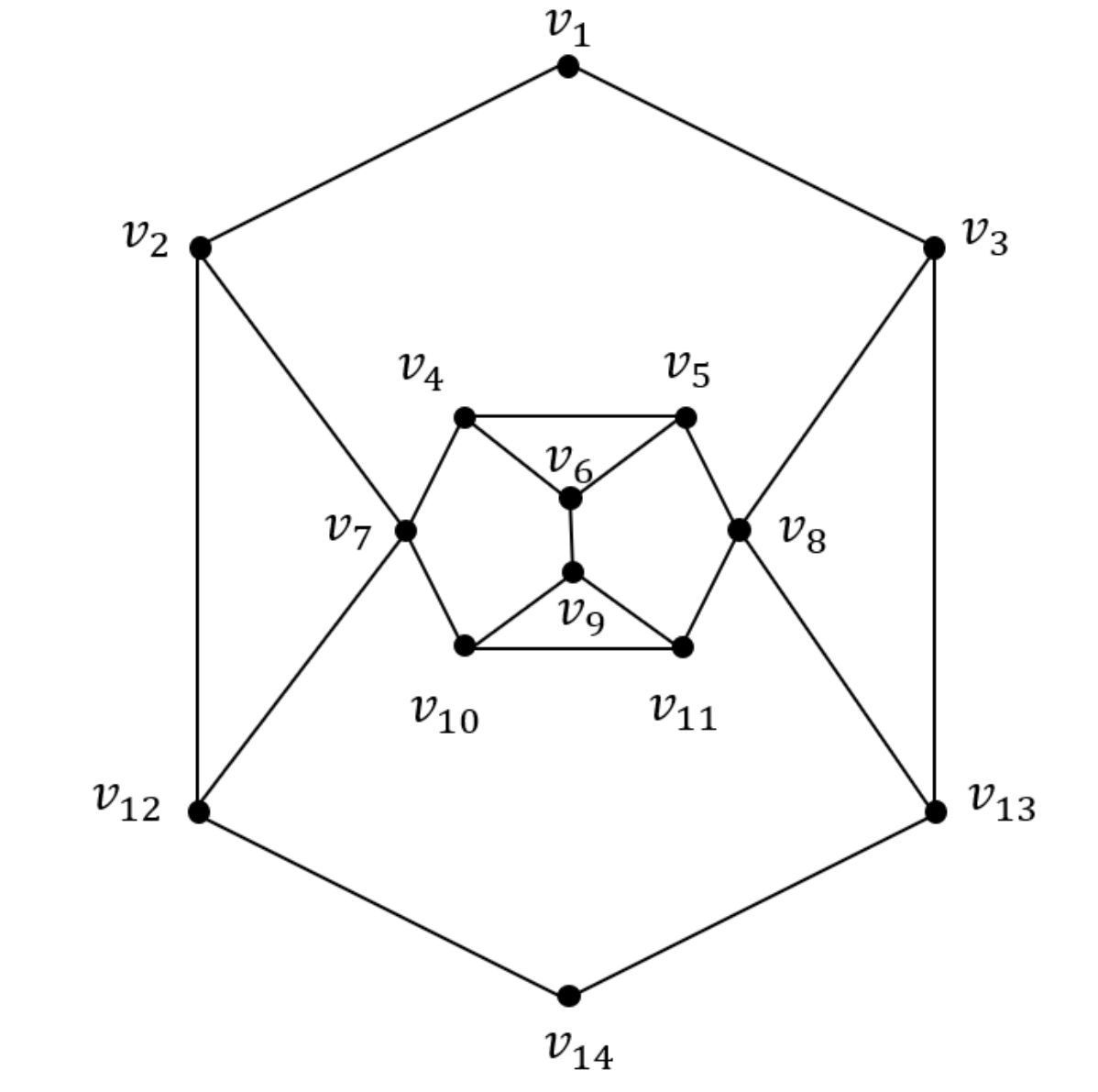}
 \caption{This gadget has representation in $\mathbb{R}^3$, and moreover the two distinguished vectors $|v_1 \rangle$ and $|v_2 \rangle$ have overlap at most $1/2$ in this dimension, i.e., $|\langle v_1 | v_2 \rangle| \leq 1/2$. Specifically, an optimal orthogonal representation is given by $\langle v_1 | = (-3\sqrt{2}, -3, 3)$,  $\langle v_2 | = (0,1,1)$, $\langle v_3 | = (-6 \sqrt{2}, 3, -9)$, $\langle v_4 |= (0, \sqrt{3}, -1)$, $\langle v_5 | = (-2\sqrt{2}, 1, \sqrt{3})$, $\langle v_6| = (-\sqrt{2}, -1, -\sqrt{3})$, $\langle v_7 | = (1,0,0)$, $\langle v_8 |= (1, 2 \sqrt{2}, 0)$,  $\langle v_9 |= (-\sqrt{2}, -1, \sqrt{3})$, $\langle v_{10} |= (0, \sqrt{3}, 1)$, $\langle v_{11} |= (2\sqrt{2}, -1, \sqrt{3})$, $\langle v_{12} |= (0,-1,1)$, $\langle v_{13} |= (2\sqrt{2}, -1, -3)$, $\langle v_{14} |= (\sqrt{2}, 1, 1)$.}
\end{figure}

\begin{figure}[t] 
\label{fig:graph-monotone}
\centering
\includegraphics[width=7cm]{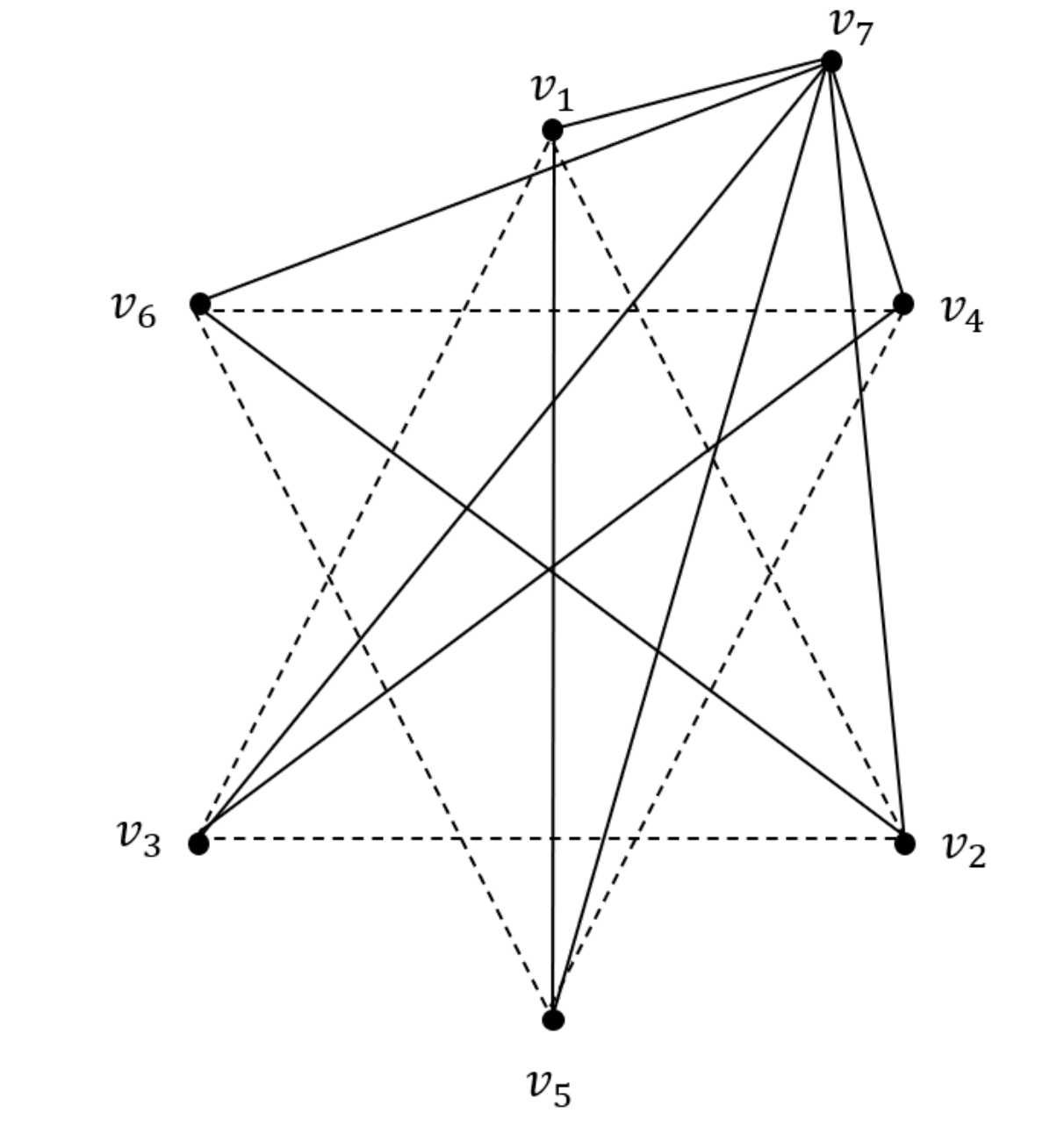}
 \caption{The proof of Proposition \ref{prop:graph-monotone} is based on this graph. Each of the dotted edges in the figure represents a copy of the $01$-gadget from Fig. 2.}
\end{figure}
\begin{prop}
\label{prop:graph-monotone}
The graph property $P^{d,n}$ of all the graphs on $n$ vertices which admit a faithful orthogonal representation in $\mathbb{R}^d$ is not monotone-decreasing.
\end{prop}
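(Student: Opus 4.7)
The plan is constructive: exhibit a graph $G$ on $n$ vertices together with a specific edge $uv \in E(G)$ such that $G \in P^{3,n}$ but $G \setminus uv \notin P^{3,n}$. The graph $G$ is the one depicted in Fig.~\ref{fig:graph-monotone}, whose solid edges encode direct orthogonality constraints and whose dotted edges are each a copy of the $60^\circ$ gadget of Fig.~\ref{fig:gadg-sixty}. The crucial property of that gadget, already recorded in the caption of Fig.~\ref{fig:gadg-sixty}, is that in any orthogonal representation in $\mathbb{R}^3$ its two distinguished vertices receive vectors of overlap at most $1/2$; equivalently, the lines they span meet at an angle of at least $60^\circ$. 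The proof then has two parts: (i) exhibit a faithful orthogonal representation of $G$ in $\mathbb{R}^3$, showing $G \in P^{3,n}$; and (ii) show that in any orthogonal representation of $G \setminus uv$ in $\mathbb{R}^3$, the constraints imposed by the remaining edges already force the vectors assigned to $u$ and $v$ to be orthogonal, which precludes a faithful representation since $u \nsim v$ in $G \setminus uv$.

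For (i) I would build the representation explicitly, by taking the coordinates listed in the caption of Fig.~\ref{fig:gadg-sixty} for each copy of the $60^\circ$ gadget and rotating that copy so that its two distinguished vectors align with the intended positions of the corresponding core vertices of $G$; the core vectors themselves, including $u$ and $v$, are read off directly from the solid orthogonality edges (in particular with $\langle u|v\rangle = 0$), and a routine check confirms distinctness and faithfulness. For (ii), each gadget copy translates into an inequality $|\langle x|y\rangle| \le 1/2$ between its pair of core vectors while each solid edge translates into the equation $\langle x|y\rangle = 0$. The key geometric input is that in $\mathbb{R}^3$ orthogonality to two linearly independent vectors already pins a vector down to a one-dimensional line; thus the solid orthogonality edges of Fig.~\ref{fig:graph-monotone} drive most core vectors into one-parameter families, and intersecting these with the gadget-induced angle bounds of at least $60^\circ$ leaves, up to a global orthogonal transformation, a unique configuration in which $\langle u|v\rangle = 0$.

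The main obstacle is precisely this rigidity argument: the topology of Fig.~\ref{fig:graph-monotone} must be rich enough that the combination of solid orthogonality edges and gadget angle bounds pins the core configuration down up to global rotation and reflection, so that $u \perp v$ emerges as a forced consequence of the other constraints, yet not so rich that $G$ itself fails to be representable in $\mathbb{R}^3$ (which would collapse part (i)). Once both parts are established, the proposition is immediate: any orthogonal representation of $G \setminus uv$ in $\mathbb{R}^3$ satisfies the constraints analyzed in (ii) and hence assigns orthogonal vectors to the now non-adjacent pair $u,v$, violating faithfulness. Therefore $d_R^*(G \setminus uv) > 3 \ge d_R^*(G)$, and the graph property $P^{3,n}$ is not monotone-decreasing.
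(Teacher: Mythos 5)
You have reproduced the paper's strategy exactly: build the graph of Fig.~3 out of copies of the $60^\circ$ gadget of Fig.~2, exhibit a faithful representation of $G$ in $\mathbb{R}^3$, and argue that deleting one cross edge leaves its endpoints forced orthogonal, which kills faithfulness. The problem is that you explicitly defer the rigidity argument as ``the main obstacle'' rather than proving it, and the heuristic you offer in its place (solid edges pinning vectors to one-parameter families, then intersecting with the angle bounds) does not, as stated, deliver the conclusion. In particular, orthogonality to \emph{two} independent vectors pins a direction down completely, not to a one-parameter family; what actually produces the one-parameter families here is a single shared neighbour, and you never identify it.

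The paper closes the gap with three concrete steps. First, all six core vertices $v_1,\dots,v_6$ are adjacent to the \emph{common} vertex $v_7$, so their vectors all lie in the plane orthogonal to $|v_7\rangle$; this is the source of the one angular degree of freedom per core vector. Second, three distinct lines through the origin of a plane whose pairwise angles are all at least $60^\circ$ must have pairwise angles exactly $60^\circ$ (the three cyclic gaps are each at least $60^\circ$ and sum to $180^\circ$), so each gadget triangle $\{v_1,v_2,v_3\}$ and $\{v_4,v_5,v_6\}$ is forced to saturate the gadget bound $|\langle v_i|v_j\rangle|\le 1/2$ and form an exact three-line star. Third, the three cross edges $(v_1,v_5)$, $(v_2,v_6)$, $(v_3,v_4)$ are mutually redundant: with both stars rigid, any two of these orthogonalities fix the relative rotation of the stars and force the third. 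Hence after deleting $(v_2,v_6)$ the remaining edges still force $\langle v_2|v_6\rangle=0$ between the now non-adjacent pair, so $d_R^*\left(G\setminus(v_2,v_6)\right)>3$ while $d_R^*(G)=3$. Without these three observations your part (ii) is a statement of what must be shown, not a proof of it; you should supply them (or an equivalent rigidity computation) to complete the argument.
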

\begin{proof}
The proof comes from a specific $01$-gadget construction shown in Fig. 2. This gadget has representation in $\mathbb{R}^3$, and moreover the two distinguished vectors $|v_1 \rangle$ and $|v_2 \rangle$ have overlap at most $1/2$ in this dimension, i.e., $|\langle v_1 | v_2 \rangle| \leq 1/2$, for a proof see \cite{RRHP+20}. We use this gadget to construct the graph $G$ shown in Fig. 3. In the graph $G$ which has a representation in $\mathbb{R}^3$, the pairs of vectors $| v_1 \rangle, |v_2 \rangle$, $|v_2 \rangle, |v_3 \rangle$ and $|v_3 \rangle, |v_1 \rangle$ are connected by the $01$-gadget, as well as the pairs of vectors $|v_4 \rangle, |v_5 \rangle$, $|v_5 \rangle, |v_6 \rangle$ and $|v_6 \rangle, |v_4 \rangle$. Furthermore, the six vectors $|v_1 \rangle, \dots, |v_6 \rangle$ all lie on a single plane since they are all distinct and connected by edges to vector $|v_7 \rangle$ which we may take without loss of generality to be $|v_7 \rangle = (0,0,1)^T$. It then follows from the aforementioned property of the $01$-gadget that the faithful orthogonal representation in $\mathbb{R}^3$ is only possible if the maximum overlap $|\langle v_i | v_j \rangle| = 1/2$ is achieved for every pair of vectors $|v_i \rangle, |v_j \rangle$ connected by the $01$-gadget. 

Now, the edge $v_2, v_6$ (and also each of the edges $v_1, v_5$ and $v_3, v_4$) ensures that the overlaps $|\langle v_1 | v_4 \rangle| = \sqrt{3}/2$, $|\langle v_2 | v_5 \rangle| = \sqrt{3}/2$ and $|\langle v_3 | v_6 \rangle| = \sqrt{3}/2$ are rigid, i.e., these overlaps are necessary for the representation in $\mathbb{R}^3$. Deleting one of these edges, say $v_2, v_6$ then means that the resulting graph $G' = G \setminus (v_2,v_6)$ no longer has a faithful orthogonal representation in $\mathbb{R}^3$. In other words, for the graph obtained by the edge deletion operation we have $d_R^*\left(G \setminus (v_2,v_6) \right) > 3$. Therefore, we conclude that the graph property $P^{d,n}$ of all the graphs on $n$ vertices which admit a faithful orthogonal representation in $\mathbb{R}^d$ is not monotone-decreasing.

\end{proof}

Note that the minimum dimension is clearly also not monotone-increasing, one can readily find examples of graphs in which deleting an edge reduces the minimum dimension. Take for example the graph $G$ consisting of four vertices $u_1, u_2, u_3, u_4$ in the square configuration with edge set $E(G) = \{(u_1,u_2), (u_2,u_3), (u_3,u_4), (u_4, u_1 \}$, this graph has $d_R^*(G) = 4$, while the deletion of an edge $(u_4,u_1)$ reduces the minimum dimension, i.e., $d_R^*(G \setminus (u_4,u_1)) = 3$. We therefore infer that the minimum dimension of a faithful orthogonal representation is neither monotone-increasing nor monotone-decreasing, i.e., is not graph monotone. 

\subsection{Large violations in $01$-gadgets via a one-to-one connection with two-player Hardy paradoxes}
In this subsection, we show a method to construct $01$-gadgets by establishing a one-to-one connection with the non-locality proofs known as two-player Hardy paradoxes. In particular, we show that for every two-player Hardy paradox for a maximally entangled state in $\mathbb{C}^d \otimes \mathbb{C}^d$, the union of the sets of projectors measured by the two players constitutes a $01$-gadget contextuality proof in $\mathbb{C}^d$. The converse statement is already well known, see for instance \cite{RRHP+20}. Namely, it is possible to construct a Hardy proof of non-locality in $\mathbb{C}^d \otimes \mathbb{C}^d$ by considering the Bell scenario in which each of two parties performs the projective measurements corresponding to any $01$-gadget in $\mathbb{C}^d$.   

Let us first give a general definition of a two-player Hardy paradox suited to our purpose. The general Hardy paradox consists of two parts: (i) a set of Hardy constraints $\texttt{HC}$ that impose that the probability of a certain set of events (input-output combinations) is zero, and (ii) a particular Hardy probability which can be inferred to be zero in all classical (local hidden variable) theories from the Hardy constraints $\texttt{HC}$, while being non-zero for a particular choice of quantum entangled state and measurements that still satisfy $\texttt{HC}$.

\begin{definition}
Let $\mathcal{H} = \mathcal{H}_1 \otimes \mathcal{H}_2$ and let $| \psi \rangle \in \mathcal{H}$ be a pure state. A two-player Hardy paradox with respect to $|\psi \rangle$ is a pair $(B_1, B_2)$ where $B_i$ is a set of orthonormal bases of $\mathcal{H}_i$ such that the following holds. 

Let $\texttt{HC}$ be the set of Hardy constraints defined on $B_1 \times B_2$ as follows. $\texttt{HC}((b_1, b_2))$ is the set of pairs $(u_1, u_2) \in b_1 \times b_2$ satisfying $\langle \psi | u_1, u_2 \rangle = 0$, i.e., the measurement outcome $(u_1, u_2)$ has zero probability of occurring if $|\psi \rangle$ is measured with respect to the basis $b_1 \times b_2$ of $\mathcal{H}$. 

Then there exists a basis $b_1^* \times b_2^*$ of $\mathcal{H}$ and a pair $(u_1^*, u_2^*) \in b_1^* \times b_2^*$ obeying $\langle \psi | u_1^*, u_2^* \rangle \neq 0$ such that for every pair of functions $(s_1, s_2)$, where $s_i$ is defined on $B_i$ and $s_i(b_i) \in b_i$ for all $b_i \in B_i$ for which $\left(s_1(b_1), s_2(b_2) \right) \in \texttt{HC}\left((b_1, b_2) \right)$ for all $(b_1, b_2) \in (B_1, B_2)$, it holds that $\left(s_1(b_1^*), s_2(b_2^*) \right) \neq (u_1^*, u_2^*)$.

\end{definition}

Let us first recall the following Proposition \ref{prop:Hardy-gadget1} which states that one can construct a two-player Hardy paradox with respect to the maximally entangled state in $\mathbb{C}^d \otimes \mathbb{C}^d$ in the Bell scenario in which Alice and Bob each perform measurements given by the projectors forming a $01$-gadget in $\mathbb{C}^d$. The proof is by construction, and is given in \cite{RRHP+20}.

\begin{prop}
\label{prop:Hardy-gadget1}
Let $\mathcal{H} = \mathbb{C}^d$, $S \subseteq \mathcal{H}$, and let $B = \big\{ b \subseteq S | \; b \; \text{is an orthonormal basis of} \; \mathcal{H} \big\}$. Consider the state $|\psi \rangle = \frac{1}{\sqrt{d}} \left(|0,0\rangle + |1,1 \rangle + \dots + |d-1, d-1 \rangle \right) \in \mathcal{H} \otimes \mathcal{H}$. If $S$ is a $01$-gadget in $\mathcal{H}$, then $(B, \bar{B})$ is a Hardy paradox with respect to $| \psi \rangle$. 
\end{prop}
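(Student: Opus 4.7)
The plan is to translate the Hardy paradox statement into a statement about $\{0,1\}$-colorings of $S$. The central ingredient will be the identity $\langle \psi | u \otimes \bar{w}\rangle = \frac{1}{\sqrt{d}}\langle w | u\rangle$ for the maximally entangled state, where $\bar{w}$ denotes the complex conjugate in the computational basis. Applied to $(B, \bar{B})$, this identity turns every Hardy constraint $\langle \psi | u_1, u_2\rangle = 0$ with $u_1 \in b \in B$ and $u_2 = \bar{w} \in \bar{b}' \in \bar{B}$ into the purely single-party orthogonality condition $u_1 \perp w$ in $\mathbb{C}^d$. Thus the full set $\texttt{HC}$ merely encodes the edges of the orthogonality graph $G_\text{gad}$.

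To set up the Hardy event, I would pick any orthonormal basis $b_1^* \subset S$ containing $v_1$ and any $b_2' \subset S$ containing $v_2$; such bases exist because $S$ is a $01$-gadget with $\omega(G_\text{gad}) = d$ (and if necessary $S$ can be enlarged by two orthonormal bases through $v_1$ and $v_2$ without affecting the defining property, since the added vectors are orthogonal to a distinguished vertex and impose no constraint on the pair $\{v_1, v_2\}$ beyond what is already present). I would then set $u_1^* = v_1$ and $u_2^* = \bar{v}_2 \in \bar{b}_2'$. The quantum probability of this event is strictly positive by the above identity, since $\langle \psi | v_1, \bar{v}_2\rangle = \frac{1}{\sqrt{d}}\langle v_2 | v_1\rangle \neq 0$ by the non-orthogonality of $v_1, v_2$.

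The core of the argument is to show that no classical pair $(s_1, s_2)$ of selection functions satisfying $\texttt{HC}$ can realize the Hardy event. Given such $(s_1, s_2)$, I would define the companion map $\tilde{s}_2 : B \to S$ by $\tilde{s}_2(b) = w$ with $\bar{w} = s_2(\bar{b})$. The Hardy constraints rewrite as $s_1(b) \not\perp \tilde{s}_2(b')$ for every $b, b' \in B$. Specializing to $b = b'$ and using that two vectors in a common orthonormal basis are either equal or orthogonal, I deduce $s_1(b) = \tilde{s}_2(b)$ for all $b$, so the two selection functions collapse into a single $s := s_1 = \tilde{s}_2$ obeying $s(b) \not\perp s(b')$ for all $b, b' \in B$. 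I would then define $f : S \to \{0,1\}$ by $f(u) = 1$ iff $u \in s(B)$. Coloring condition (i) follows from the pairwise non-orthogonality of the range of $s$, while (ii) follows from $s(b) \in b$ together with (i). Since the Hardy event forces $s(b_1^*) = v_1$ and $s(b_2') = v_2$, we obtain $f(v_1) = f(v_2) = 1$, contradicting the defining property of the $01$-gadget.

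The main obstacle I anticipate is the collapse step $s_1 = \tilde{s}_2$: it relies on using the Hardy constraints along the diagonal pairs $(b, \bar{b})$ to tie the two parties' selections at every basis, and without this rigidity the reduction of two-party Hardy strategies to a single-party $\{0,1\}$-coloring would break down. A smaller but necessary bookkeeping point is verifying that any addition of bases through $v_1, v_2$ preserves the $01$-gadget property, so that the final contradiction with the coloring constraint is genuinely available.
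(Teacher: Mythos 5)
The paper does not actually prove this proposition in-text --- it states that ``the proof is by construction, and is given in [RRHP+20]'' --- so there is no internal proof to compare against word for word. That said, your argument is correct and is precisely the mirror image of the paper's own proof of Proposition \ref{prop:Hardy-gadget2}: the same identity $\langle \psi | u, \bar w\rangle = \tfrac{1}{\sqrt d}\,\overline{\langle u|\bar{\bar w}\rangle}$ converts Hardy constraints into single-party orthogonality, and the same dictionary between selection functions and $\{0,1\}$-colorings carries the contradiction. Your two key moves --- the diagonal collapse $s_1(b)=\tilde s_2(b)$ from the constraint on the pair $(b,\bar b)$, and the verification that the indicator of the range of $s$ is a valid coloring (pairwise non-orthogonality of the range gives condition (i), $s(b)\in b$ gives condition (ii)) --- are exactly what makes the reduction work, and both are sound.

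The one point that deserves emphasis is the caveat you yourself flag: the Hardy event requires bases $b_1^*,b_2'\in B$ containing $v_1$ and $v_2$ respectively, and a $01$-gadget need not contain such bases (the Clifton set of Eq.~(\ref{eq:Clif-orth-rep}) does not: $u_1$'s only neighbours $u_2,u_3$ are mutually non-orthogonal). So as literally stated, with $B$ the set of bases already inside $S$, the proposition needs the enlargement you describe. Your justification that the enlargement is harmless is right for the direction that matters --- adding vectors can only shrink the set of valid colorings, so the constraint $f(v_1)+f(v_2)\leq 1$ survives, and even if the enlarged set were to become uncolorable the Hardy condition would hold vacuously --- but you should state explicitly that the proposition is then being proved for the completed set rather than the original $S$. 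This is a defect of the proposition's phrasing rather than of your argument, and the construction in the cited reference handles it the same way.
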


We now move to the result of this subsection. Namely, for any Hardy paradox with respect to the maximally entangled state in $\mathbb{C}^d \otimes \mathbb{C}^d$, the union of the sets of projectors measured by Alice and Bob constitutes a $01$-gadget in $\mathbb{C}^d$. 

\begin{prop}
\label{prop:Hardy-gadget2}
Let $\mathcal{H} = \mathbb{C}^d$, let $B_1$ and $B_2$ be two orthonormal bases of $\mathcal{H}$, and let $|\psi \rangle = \frac{1}{\sqrt{d}} \left(|0,0\rangle + |1,1 \rangle + \dots + |d-1, d-1 \rangle \right) \in \mathcal{H} \otimes \mathcal{H}$. Let $S$ be the set $S := \bigcup_{b \in B_1} b \cup \bigcup_{b \in B_2} \bar{b} \subseteq \mathcal{H}$. If $(B_1, B_2)$ is a Hardy paradox with respect to $| \psi \rangle$, then $S$ is a $01$-gadget in $\mathcal{H}$.
\end{prop}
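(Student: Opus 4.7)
The plan is to exploit the standard identity $\langle u,v|\psi\rangle = \frac{1}{\sqrt{d}}\langle u|\bar v\rangle$ satisfied by the maximally entangled state, which converts Hardy zero-probability relations into ordinary orthogonalities in $\mathbb{C}^d$: a pair $(u_1,u_2)\in b_1\times b_2$ lies in $\texttt{HC}((b_1,b_2))$ iff $u_1\perp \bar{u}_2$. This explains the choice $S=\bigcup_{b\in B_1}b\cup\bigcup_{b\in B_2}\bar b$ and immediately identifies the distinguished non-orthogonal pair of the gadget as $v_1:=u_1^*\in S$ and $v_2:=\bar u_2^*\in S$, whose inner product $\langle v_1|v_2\rangle=\sqrt{d}\,\langle\psi|u_1^*,u_2^*\rangle$ is nonzero by hypothesis.

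The core of the proof is a correspondence between $\{0,1\}$-colorings $f$ of $S$ and pairs of choice functions $(s_1,s_2)$ with $s_i(b)\in b$ that satisfy all Hardy constraints (interpreted, as befits a genuine Hardy paradox, as $(s_1(b_1),s_2(b_2))\notin \texttt{HC}((b_1,b_2))$ for every $(b_1,b_2)$). Given $f$, the basis-sum rule provides a unique $s_1(b)\in b$ with $f(s_1(b))=1$ for each $b\in B_1$ and a unique $t(b)\in\bar b$ with $f(t(b))=1$ for each $b\in B_2$; setting $s_2(b):=\overline{t(b)}$, the Hardy constraints are then automatic, because $(s_1(b_1),s_2(b_2))\in\texttt{HC}((b_1,b_2))$ would force $s_1(b_1)\perp t(b_2)$, and two orthogonal vertices of $S$ both colored $1$ violate $\sum_{v\in\mathcal O}f(v)\leq 1$. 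The Hardy-paradox conclusion therefore yields $(s_1(b_1^*),s_2(b_2^*))\neq(u_1^*,u_2^*)$, which is exactly $f(v_1)+f(v_2)\leq 1$, the required gadget inequality.

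It remains to show $\{0,1\}$-colorability of $S$, which I would obtain by reversing the construction: a non-degenerate Hardy paradox must admit at least one satisfying strategy $(s_1,s_2)$ (if none existed, every putative coloring would fail and $S$ would already be a KS set, a strictly stronger statement than a gadget), and from any such $(s_1,s_2)$ one defines $f(v)=1$ exactly on $\{s_1(b)\}_{b\in B_1}\cup\{\overline{s_2(b)}\}_{b\in B_2}$ and $f(v)=0$ elsewhere. The basis-sum rule holds by construction, and the orthogonality rule on cross-party orthogonal pairs follows directly from the Hardy constraints on $(s_1,s_2)$. The main obstacle I anticipate is within-party orthogonalities: two vectors lying in different bases of $B_1$ (or of $\bar B_2$) that happen to be orthogonal in $\mathbb{C}^d$ are not controlled by Hardy constraints, so one must either argue that such accidental within-party orthogonalities cannot arise in the $S$ obtained from a Hardy paradox (e.g.\ by symmetry between the two players under the Schmidt decomposition of $|\psi\rangle$), or select $(s_1,s_2)$ judiciously among the satisfying strategies so that the induced $f$ respects them.
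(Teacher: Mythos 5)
Your core argument --- converting each $\{0,1\}$-coloring $f$ of $S$ into a pair of choice functions $(s_1,s_2)$ that automatically satisfies the Hardy constraints (because a violation would put two orthogonal $1$-colored vectors in $S$), then reading the Hardy conclusion $(s_1(b_1^*),s_2(b_2^*))\neq(u_1^*,u_2^*)$ as $f(u_1^*)+f(\overline{u_2^*})\leq 1$, and using $\sqrt{d}\,\langle\psi|u_1^*,u_2^*\rangle=\overline{\langle u_1^*|\overline{u_2^*}\rangle}\neq 0$ to see the distinguished pair is non-orthogonal --- is exactly the paper's proof; you simply make explicit the coloring-to-strategy correspondence that the paper compresses into a single sentence, and you correctly read the paper's ``$\in\texttt{HC}$'' in the definition as the intended ``$\notin\texttt{HC}$''. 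The remaining issue you flag, namely establishing $\{0,1\}$-colorability of $S$ (in particular handling accidental within-party orthogonalities between vectors from different bases, which no Hardy constraint controls), is a genuine requirement of the gadget definition that the paper's proof silently omits as well, so you have not missed any step the paper actually supplies.
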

\begin{proof}
The proof follows a similar idea linking KS proofs to pseudo-telepathy games in \cite{RW04}. Let $f : S \rightarrow \{0,1\}$ be a function such that for all orthonormal bases $b \subseteq S$, we have $\sum_{v \in b} f(v) = 1$. 

Since $(B_1, B_2)$ is a Hardy paradox with respect to the maximally entangled state $| \psi \rangle$, there exist bases $b_1^*\times b_2^*$ of $\mathbb{C}^d \otimes \mathbb{C}^d$ and a pair $(u_1^*, u_2^*) \in b_1^* \times b_2^*$ obeying $|\langle \psi  | u_1^*, u_2^* \rangle| \neq 0$ such that for every pair of functions $(s_1, s_2)$, where $s_i$ is defined on $B_i$ and $s_i(b_i) \in b_i$ for all $b_i \in B_i$ for which $\left(s_1(b_1), s_2(b_2) \right) \in \texttt{HC}\left((b_1, b_2) \right)$ for all $(b_1, b_2) \in (B_1, B_2)$, it holds that $\left(s_1(b_1^*), s_2(b_2^*) \right) \neq (u_1^*, u_2^*)$. 

Now, $\left(s_1(b_1^*), s_2(b_2^*) \right) \neq (u_1^*, u_2^*)$ implies that this output pair occurs with probability $0$ in all classical (local hidden variable) theories. In other words $f\left(u_1^* \right) + f\left(u_2^* \right) \leq 1$. On the other hand, we have
\begin{eqnarray}
|\langle \psi  | u_1^*, u_2^* \rangle| &\neq & 0, \nonumber \\
\frac{1}{\sqrt{d}} \sum_{i =0}^{d-1} \langle i | u_1^* \rangle \langle i | u_2^* \rangle &\neq & 0 \; \; \implies
\frac{1}{\sqrt{d}} \overline{\langle u_1^* | \overline{u_2^*} \rangle}  \neq  0. \nonumber \\
\end{eqnarray}
In other words, the two vectors $|u_1^* \rangle$ and $|u_2^* \rangle$ are non-orthogonal vectors such that in any classical deterministic assignment $f$ of the vectors in the set $S$, it holds that $f\left(u_1^* \right) + f\left(u_2^* \right) \leq 1$. Therefore, the set of vectors in $S$ constitutes a $01$-gadget with $|u_1^* \rangle$ and $|u_2^* \rangle$ being the two distinguished non-orthogonal vectors that cannot both be assigned value $1$ in any non-contextual assignment. 
\end{proof}
The results in this subsection, linking two fundamental aspects of non-locality and contextuality, are useful in the construction of novel $01$-gadgets and Hardy paradoxes with properties suited to specific applications. In the Appendix II, we use this connection to give a novel construction of a $01$-gadget contextuality proof for any arbitrary two vectors in $\mathbb{C}^d$, in particular with the overlap between the two distinguished vectors being arbitrarily close to unity. 

Similarly, we have by Proposition \ref{prop:Hardy-gadget1} that when the two parties Alice and Bob perform, on the maximally entangled state in $\mathbb{C}^d \otimes \mathbb{C}^d$, projective measurements corresponding to a $01$-gadget with the two distinguished vectors being identical, then this gives rise to a Hardy paradox with the Hardy probability taking the value $1/d$. One possible method to achieve this is through the $01$-gadget based on the following vectors \cite{RRHP+20}:
\begin{eqnarray}
&&|u_1 \rangle = (1,-1,0)^T; \; \; |u_2 \rangle = (1,1,1)^T;\nonumber \\
&&|u_3 \rangle = (1,1,0)^T; \; \; |u_4 \rangle = (1,1,b)^T; \nonumber \\
&&|u_5 \rangle = (-2,1,1)^T; \; \; |u_6 \rangle = (1,-1,3)^T; \nonumber \\
&&|u_7 \rangle = (3,-3,-2)^T; \; \; |u_8 \rangle = (2,0,3)^T;\nonumber \\
&& |u_9 \rangle = (-3,0,2)^T; \;\;|u_{10} \rangle = (-2,2,-3)^T; \nonumber \\
&& |u_{11} \rangle = (3,-3,-4)^T; \; \; |u_{12} \rangle = (4,0,3)^T; \nonumber \\
&&|u_{13} \rangle = (-3,0,4)^T; \; \; |u_{14} \rangle = (-4,4,-3)^T;  \nonumber \\
&&|u_{15} \rangle = (3,-3,-8)^T; \; \; |u_{16} \rangle = (8,0,3)^T; \nonumber \\
&& |u_{17} \rangle = (-3,0,8)^T; \;\;|u_{18} \rangle = (-8,4+\sqrt{7},-3)^T;\nonumber \\
&&|u_{19} \rangle = (0,1,-1)^T; |u_{20} \rangle = (0,1,0)^T; \nonumber \\
&& |u_{21} \rangle = (0,-3+8b,-16-3b)^T; \;\;|u_{22} \rangle = (1,0,0)^T; \nonumber \\
&&|u_{23} \rangle = (1,0,-1)^T; \;\; |u_{24} \rangle = (2-\sqrt{2},0,1)^T; \nonumber \\
&&|u_{25} \rangle = (1,-2,1)^T; \; \; |u_{26} \rangle = (0,1,2)^T; \nonumber \\
&& |u_{27} \rangle = (0,2,-1)^T; \;\; |u_{28} \rangle = (1,-1,-2)^T; \nonumber \\
&& |u_{29} \rangle = (1,-1,1)^T; \; \; |u_{30} \rangle = (0,1,1)^T; \nonumber \\
&&|u_{31} \rangle = (0,1,-1)^T; \;\; |u_{32} \rangle = (-1,1,1)^T; \nonumber \\
&&|u_{33} \rangle = (-1,1,-2)^T; \;\; |u_{34} \rangle = (0,2,1)^T; \nonumber \\
&& |u_{35} \rangle = (0,1,-2)^T; \; \; |u_{36} \rangle = (2,-2,-1)^T; \nonumber \\
&&|u_{37} \rangle = (1,-1,4)^T; \; \;  |u_{38} \rangle = (-2-\sqrt{2},6-\sqrt{2},2)^T;  \nonumber \\
&& |u_{39} \rangle = |u_2 \rangle;\; \; |u_{40} \rangle = |u_3 \rangle; \; \; |u_{41} \rangle = (1,1,-2+\sqrt{2})^T; \nonumber \\
&& |u_{42} \rangle = |u_1 \rangle; |u_{43} \rangle = (0,0,1)^T; \nonumber 
\end{eqnarray}
with $b = \frac{-4+\sqrt{7}}{3}$, and where we have the following identities $|u_{1}\rangle=|u_{42}\rangle$, $|u_2\rangle=|u_{39}\rangle$,  $|u_3\rangle=|u_{40}\rangle$. The vector $|u_1 \rangle = |u_{42} \rangle$ forms the distinguished vector of this gadget, it is an open question whether this set of $40$ vectors is the minimal set with this property. 

\subsection{Application I. A rigidity property of $01$-gadgets and randomness certification.}
As we have noted (also see \cite{RRHP+20}), the state-dependent tests of contextuality known as $01$-gadgets are especially useful for certification of private intrinsic quantum randomness, under the assumption of a fixed Hilbert space dimension of the system, and the assumption that the same projectors are being measured under different contexts. We leave the formulation of a contextuality-based randomness generation protocol and its rigorous security proof for future work \cite{R+21}. Here, we show a rigidity property of the $01$-gadget which is especially useful in contextuality-based randomness certification. In particular, we show that for the Clifton bug, the observation of the probabilities $P(|v_1 \rangle) = 1$ and $P(|v_8 \rangle) = 1/9$ implies a rigidity statement on the projectors being measured. Up to the left-right symmetry of the Clifton bug, and up to unitaries, the set of projectors realizing the Clifton bug in $\mathbb{R}^3$ is unique.

\begin{prop}
In the measurement configuration of the Clifton $01$-gadget in Fig. 1, the observation of the probabilities $P(|v_1 \rangle) = 1$ and $P(|v_8 \rangle) = 1/9$ implies that the set of projectors realizing the Clifton bug in $\mathbb{R}^3$ is unique (up to unitaries and the relabelling $\{u_2 \leftrightarrow u_3, u_4 \leftrightarrow u_5, u_6 \leftrightarrow u_7 \}$). 
\end{prop}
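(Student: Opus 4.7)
My plan is to reduce the proposition to a rigidity statement about orthogonal representations of the Clifton bug graph in $\mathbb{R}^3$, and to establish this rigidity by a direct parameterisation together with the overlap constraint extracted from the observed probabilities. The observation $P(|v_1\rangle)=1$ forces the measured state to be the pure state $|u_1\rangle\langle u_1|$, since any state giving probability $1$ to a rank-one projector must coincide with that projector; then $P(|v_8\rangle)=1/9$ translates into the single geometric equation $|\langle u_1|u_8\rangle|=1/3$. The task is therefore to show that any $8$-tuple of real unit vectors in $\mathbb{R}^3$ which realises the orthogonality graph of Fig.~1 faithfully and additionally satisfies $|\langle u_1|u_8\rangle|=1/3$ is related, via an element of $O(3)$ and the involution $\{u_2\leftrightarrow u_3,\,u_4\leftrightarrow u_5,\,u_6\leftrightarrow u_7\}$, to the explicit realisation in \eqref{eq:Clif-orth-rep}.

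Next I would exhaust the continuous symmetries of the problem. Using the $O(3)$ action I fix $|u_1\rangle=\tfrac{1}{\sqrt{3}}(-1,1,1)^{T}$; the adjacencies $u_1\sim u_2$ and $u_1\sim u_3$ then force both $|u_2\rangle$ and $|u_3\rangle$ into the two-dimensional plane $u_1^{\perp}$, and the residual $SO(2)$ stabilising $|u_1\rangle$ absorbs one angular degree of freedom so that only the relative angle $\theta$ between $|u_2\rangle$ and $|u_3\rangle$ in $u_1^{\perp}$ remains free. I would then propagate along the two symmetric chains of the bug graph from $u_1$ to $u_8$, through $u_2,u_4,u_6$ on one side and $u_3,u_5,u_7$ on the other. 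Each new vertex brings one or two orthogonality equations $\langle u_i|u_j\rangle=0$ which, once the preceding neighbours are fixed, cut the available degrees of freedom down to a finite set of sign choices; in this way all eight vectors become explicit functions of $\theta$, with $|u_8\rangle$ identified (up to a sign) as the unit normal to a pair of already-determined non-parallel vectors.

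The final step is to close the system with the overlap constraint. Writing $\langle u_1|u_8\rangle=F(\theta)$ as a trigonometric function of the single remaining continuous parameter, the condition $|F(\theta)|=1/3$ becomes an algebraic equation in $\cos\theta$ whose solutions can be enumerated. I would check that, after discarding branches in which two of the eight vectors collide (thereby violating faithfulness of the representation) or in which a claimed basis fails to span $\mathbb{R}^3$, the equation admits exactly two admissible solutions, related by $\theta\mapsto-\theta$, which is precisely the left-right involution $\{u_2\leftrightarrow u_3,\,u_4\leftrightarrow u_5,\,u_6\leftrightarrow u_7\}$ of Fig.~1. The main obstacle in this program is the sign bookkeeping in the propagation step: a priori each triangle of orthogonalities introduces a two-fold ambiguity in choosing $|u_j\rangle$, and one must show that all branches other than the two Clifton branches degenerate into non-faithful realisations. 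Once that case analysis is completed, the uniqueness of $\theta$ solving $|F(\theta)|=1/3$ delivers the claimed rigidity, and hence the proposition.
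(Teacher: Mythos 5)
Your opening move is correct and matches the paper: $P(|v_1\rangle)=1$ forces the state to be $|v_1\rangle\langle v_1|$, and $P(|v_8\rangle)=1/9$ then reads $|\langle v_1|v_8\rangle|=1/3$. The gap is in the propagation step. After fixing $|u_1\rangle$ and the relative angle $\theta$ of $|u_2\rangle,|u_3\rangle$ inside $u_1^{\perp}$, the remaining vectors are \emph{not} functions of $\theta$ alone. In the Clifton graph the triples $\{u_2,u_4,u_6\}$ and $\{u_3,u_5,u_7\}$ are orthonormal bases of $\mathbb{R}^3$: at the moment you place $|u_4\rangle$ its only already-placed neighbour is $|u_2\rangle$, so $|u_4\rangle$ ranges over a full circle in $u_2^{\perp}$ (a continuous angle $\phi_1$), not a finite set of sign choices, and likewise the pair $(|u_5\rangle,|u_7\rangle)$ carries an independent angle $\phi_2$. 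Your configuration therefore depends on three continuous parameters $(\theta,\phi_1,\phi_2)$ against only two closing equations, $\langle u_4|u_5\rangle=0$ and the overlap condition; the promised single-variable equation $|F(\theta)|=1/3$ does not exist, and a naive dimension count would even predict a one-parameter family rather than rigidity.

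The paper avoids this by spending the overlap datum at the start rather than the end: the full rotational freedom is used to fix \emph{both} $|v_1\rangle=(1,0,0)^T$ and $|v_8\rangle=(1/3,2\sqrt{2}/3,0)^T$. Then $|v_6\rangle\propto|v_2\rangle\times|v_8\rangle$ and $|v_7\rangle\propto|v_3\rangle\times|v_8\rangle$ are genuinely pinned down by two already-placed neighbours, $|v_4\rangle,|v_5\rangle$ follow by further cross products, and only the two angles $\theta_1,\theta_2$ of $|v_2\rangle,|v_3\rangle$ survive, subject to the single equation $\langle v_4|v_5\rangle=0$. Even there, discreteness is not automatic (one equation in two unknowns generically cuts out a curve); the paper obtains it because, writing $\theta_1=\beta-\alpha$, $\theta_2=\beta+\alpha$, the equation becomes $1+4\cos^2(2\alpha)-4\cos(2\alpha)\cos(2\beta)=0$, whose left-hand side is bounded below by $\left(1\pm 2\cos(2\alpha)\right)^2\geq 0$-type estimates and vanishes only at the isolated points $(\beta,\alpha)=(0,\pi/6)$ and $(\pi/2,\pi/3)$, i.e.\ $(\theta_1,\theta_2)=(5\pi/6,\pi/6)$ and its swap. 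To repair your outline you need both ingredients: reorder the construction so that $|u_8\rangle$ is fixed first from the observed overlap, and replace the appeal to a one-variable root count by an argument showing that the remaining closing equation has a zero-dimensional solution set.
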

\begin{proof}
The conditions $P(|v_1 \rangle) = 1$ and $P(|v_8 \rangle) = 1/9$ guarantee that without loss of generality, we may take
\begin{eqnarray}
|v_1 \rangle = (1,0,0)^T, \quad |v_8 \rangle = \left(\frac{1}{3}, \frac{2\sqrt{2}}{3}, 0 \right)^T,
\end{eqnarray}
and the state being measured to be $|\psi \rangle = |v_1 \rangle = (1,0,0)^T$. Parametrizing 
\begin{eqnarray}
|v_2 \rangle &=& \left(0, \cos(\theta_1), \sin(\theta_1) \right)^T, \nonumber \\
|v_3 \rangle &=& \left(0, \cos(\theta_2), \sin(\theta_2) \right)^T,
\end{eqnarray}
to maintain orthogonality with $|v_1 \rangle$, we may deduce
\begin{eqnarray}
|\tilde{v}_6 \rangle &=& |v_2 \rangle \times |v_8 \rangle = \left( \frac{2 \sqrt{2} \sin(\theta_1)}{3}, \frac{-\sin(\theta_1)}{3}, \frac{\cos(\theta_1)}{3} \right)^T, \nonumber \\
|\tilde{v}_7 \rangle &=& |v_3 \rangle \times |v_8 \rangle = \left( \frac{2 \sqrt{2} \sin(\theta_2)}{3}, \frac{-\sin(\theta_2)}{3}, \frac{\cos(\theta_2)}{3} \right)^T, \nonumber \\
\end{eqnarray}
where the vectors $|\tilde{v}_6 \rangle$ and $|\tilde{v}_7 \rangle$ are unnormalized, i.e., $|v_6 \rangle = \frac{|\tilde{v}_6 \rangle}{\| |\tilde{v}_6 \rangle \|}$ and $|v_7 \rangle = \frac{|\tilde{v}_7 \rangle}{\| |\tilde{v}_7 \rangle \|}$. Finally, we obtain 
\begin{eqnarray}
|\tilde{v}_4 \rangle &=& |v_2 \rangle \times |v_6 \rangle = \left(-1, -2\sqrt{2} \sin^2(\theta_1), \sqrt{2} \sin(2 \theta_1) \right)^T, \nonumber \\
|\tilde{v}_5 \rangle &=& |v_3 \rangle \times |v_7 \rangle = \left(-1, -2\sqrt{2} \sin^2(\theta_2), \sqrt{2} \sin(2 \theta_2) \right)^T. \nonumber \\
\end{eqnarray}
Imposing the remaining orthogonality condition $|\langle v_4 | v_5 \rangle | = 0$, we obtain that
\begin{eqnarray}
1 + 8 \sin^2(\theta_1) \sin^2(\theta_2) + 2 \sin(2 \theta_1) \sin(2 \theta_2) = 0, 
\end{eqnarray}
or equivalently that
\begin{eqnarray}
3 - 2 \cos(2 \theta_1) - 2 \cos(2 \theta_2) + 2 \cos\left(2(\theta_1 - \theta_2) \right) = 0.
\end{eqnarray}
We now want to show that the solution set $\{\theta_1, \theta_2 \}$ to this equation is unique (up to the symmetry of the relabelling $\{u_2 \leftrightarrow u_3, u_4 \leftrightarrow u_5, u_6 \leftrightarrow u_7 \}$). We write $\theta_1 = \beta - \alpha$ and $\theta_2 = \beta + \alpha$ for $\alpha, \beta \in [0, \pi]$ to obtain
\begin{eqnarray}
\label{eq:alphbet}
1 + 4 \cos^2(2 \alpha) - 4 \cos(2 \alpha) \cos(2 \beta) = 0.
\end{eqnarray}
Using the facts that $\left(1 \pm 2 \cos(2 \alpha) \right)^2 \geq 0$, we deduce that $\frac{1 + 4 \cos^2(2 \alpha)}{4 \cos(2 \alpha)} \geq 1$ or $\frac{1 + 4 \cos^2(2 \alpha)}{4 \cos(2 \alpha)} \leq -1$. This gives that the Eq.(\ref{eq:alphbet}) can only be satisfied when $\beta = 0$ (with corresponding $\alpha = \pi/6$) or $\beta = \pi/2$ (with corresponding $\alpha = \pi/3$). The solutions are thus $(\theta_1, \theta_2) = (5\pi/6, \pi/6)$ and $(\theta_1, \theta_2) = (\pi/6, 5\pi/6)$. In other words, up to the relabelling (and unitary rotations of all the vectors), the set of vectors realizing the Clifton bug in $\mathbb{R}^3$ is unique.  
\end{proof}

\section{Conclusion and Open Questions}\label{sec:concl}
In this paper, we have studied large violations of state-independent and special state-dependent non-contextuality inequalities. In particular, we exploited a connection between Kochen Specker proofs and pseudo-telepathy games from \cite{RW04} to show KS sets with large violations in Hilbert spaces of dimension $d \geq 2^{17}$. We also showed that these KS sets satisfy the condition for a boost in one-shot zero-error capacity through shared entanglement, providing an interesting class of channels for this application. Intriguingly, we showed that despite their foundational interest, the maximum violation of KS state-independent non-contextuality inequalities does not serve to certify intrinsic randomness. To remedy this fault, we studied large violations in a class of state-dependent non-contextuality inequalities known as $01$-gadgets.We showed an intriguing technical result here that the minimum dimension of a faithful orthogonal representation of a graph in $\mathbb{R}^d$ is not graph monotone, a result that has fundamental applications in constructions of KS proofs. We derived a one-to-one correspondence between $01$-gadgets in $\mathbb{C}^d$ and two-player Hardy paradoxes for the maximally entangled state in $\mathbb{C}^d \otimes \mathbb{C}^d$ and exploited this connection to construct novel large violation $01$-gadgets. Finally, we show that $01$-gadgets are natural candidates for self-testing under the assumption of a fixed dimension, a result which has interesting application for contextuality-based randomness generation which we study in future work. 

In future, it would be interesting to study the self-testing properties of $01$-gadgets \cite{BRV+19} and derive rigorous security proofs of contextuality-based randomness certification protocols using these. Similarly, the connection to two-player Hardy paradoxes serves as a natural tool to construct optimal randomness amplification protocols \cite{RBHH+15, BRGH+16, WBGH+16}. It is also interesting to investigate applications of large violations of the non-contextuality inequalities studied here for one-way communication problems \cite{SHP19}. Finally, it would be good to phrase the large violation results in a resource-theoretic framework, in particular to show that the non-contextual simulation of the large violation demands a correspondingly large classical memory \cite{KGPL+11}.

\textit{Acknowledgments.-}
We acknowledge useful discussions with Susan Liao Jiayang. This work is supported by the Start-up Fund 'Device-Independent Quantum Communication Networks' from The University of Hong Kong, the Seed Fund 'Security of Relativistic Quantum Cryptography' and the Early Career Scheme (ECS) grant 'Device-Independent Random Number Generation and Quantum Key Distribution with Weak Random Seeds'.

\begin{widetext}
\section{Appendix I: The number of neighbors of each vertex in the large violation KS graph}
In this Appendix, we show a structural property of the large violation KS graph, that is useful in deriving large separations in the success probability in communication tasks such as shown in \cite{SHP19}. We divide the vertices in the KS graph into two sets arising from the non-local hidden matching game, the vectors from Alice's optimal quantum strategy form the bases $P$ and the vectors from Bob's optimal strategy form the bases $Q$. The KS set $S$ is then the union of the two bases sets, i.e., $S = P \cup Q$.
The number of bases in $Q$ is $\frac{n}{2}$ and the number of vectors in bases set $Q$ is $\frac{n^2}{2}$. The number of neighbours for an arbitrary vector $Q_M^{(i,j)}$ can be derived as follows.

\textbf{Case 1 in basis $Q_M$}: All $n-1$ other vectors are orthogonal to $Q_M^{(i,j)}$.

\textbf{Case 2 in basis $Q_{M'}$}: All vectors except the special four vectors in basis $Q_{M'}$ are orthogonal to $Q_M^{(i,j)}$. The special four vectors are 
\begin{center}
$Q_{M'}^{(i,j')}=\bordermatrix{
        &     \cr
        & 0    \cr
        & \vdots    \cr
       i & 1   \cr
       & 0    \cr
      & \vdots    \cr
      j'& 1    \cr
      &0   \cr
     &  \vdots  \cr
}$
\qquad 
$Q_{M'}^{(i,j')}=\bordermatrix{
        &     \cr
        & 0    \cr
        & \vdots    \cr
       i & -1   \cr
       & 0    \cr
      & \vdots    \cr
      j'& 1    \cr
      &0   \cr
     &  \vdots  \cr
}$
\qquad 
$Q_{M'}^{(i',j)}=\bordermatrix{
        &     \cr
        & 0    \cr
        & \vdots    \cr
       i' & 1   \cr
       & 0    \cr
      & \vdots    \cr
      j& 1    \cr
      &0   \cr
     &  \vdots  \cr
}$
\qquad 
$Q_{M'}^{(i',j)}=\bordermatrix{
        &     \cr
        & 0    \cr
        & \vdots    \cr
       i' & 1   \cr
       & 0    \cr
      & \vdots    \cr
      j& -1    \cr
      &0   \cr
     &  \vdots  \cr
}$
\end{center}
and the $i',j'$ of these vectors depend on the matching $M'$. In total, there are $(\frac{n}{2}-1)(n-4)$ vectors orthogonal to $Q_M^{(i,j)}$.

\textbf{Case 3 in bases set $P$}: Consider $Q_M^{(i,j)}$ as above with entries $1$ in positions $i$ and $j$ and entry $0$ otherwise. 
Note that vectors in bases set $P$ are the vectors with every entry to be arranged 1 or -1. So for vectors
$p=\left(\begin{array}{c}
[1] \cr
[2] \cr
\vdots \cr
[i] \cr
\vdots \cr
[j] \cr
\vdots \cr
[n] \cr
\end{array}\right),
$
if $Q_M^{(i,j)}$ is orthogonal to $p$, we have $[i]+[j]=0$. There are a total of $\frac{2^{n-2}\times 2}{2}$ vectors in bases set $P$ that meet this requirement.


\textbf{Consider these three cases, the number of neighbour of $Q_M^{(i,j)}$ is} $(n-1)+(\frac{n}{2}-1)(n-4)+\frac{2^{n-2}\times 2}{2}=2^{(n-2)}+\frac{n^{2}}{2}-2n+3$.


\textbf{Case 1 in bases set $P$}: For $p=\left(\begin{array}{c}
[1] \cr
[2] \cr
\vdots \cr
[i] \cr
\vdots \cr
[j] \cr
\vdots \cr
[n] \cr
\end{array}\right)
$ with each entry equal to 1 or -1, there must be a vector $p'=\left(\begin{array}{c}
[1]' \cr
[2]' \cr
\vdots \cr
[i]' \cr
\vdots \cr
[j] '\cr
\vdots \cr
[n]' \cr
\end{array}\right)$ in bases set $P$ which is orthogonal to $p$.
So that $$[1][1]'+[2][2]'+\cdots+[i][i]'+\cdots+[n][n]'=0$$
For half of the terms in this equation $[i][i]'=1$, and for the other half $[i][i]'=-1$.

In order to find all the neighbours of $p$, we first consider inverting two entries of vector $p'$ at the same time. And these two entries must come from different terms sets, i.e., one entry $[i]'$ of $p'$ is from $[i][i]'=1$ and correspondingly another entry $[i]'$ of $p'$ is from $[i][i]'=-1$. We then flip four entries of vector $p'$ simultaneously, two from $[i][i]'=1$, another two from $[i][i]'=-1$. We then proceed to flip six, eight entries and so on, until we invert all the entries of $p'$.
In total, the number of neighbours of $p$ is thus found to be
$$
\frac{1+C_{\frac{n}{2}}^1C_{\frac{n}{2}}^1+C_{\frac{n}{2}}^2C_{\frac{n}{2}}^2+C_{\frac{n}{2}}^3C_{\frac{n}{2}}^3+\cdots+C_{\frac{n}{2}}^{\frac{n}{2}}C_{\frac{n}{2}}^{\frac{n}{2}}}{2}=\frac{C_n^{\frac{n}{2}}}{2}
$$


\textbf{Case 2 in bases set $Q$}:
In basis $Q_M$, since either $[i]+[j]=0$ or $[i]-[j]=0$, either $Q_M^{(i,j)}$ or $Q_M^{'(i,j)}$ will be orthogonal to $P_x^i$. In basis $Q_M$, half of the vectors are orthogonal to $p$. In bases set $Q$, $(\frac{n}{2})(\frac{n}{2})=\frac{n^2}{4}$ vectors are orthogonal to $p$. In total, the number of neighbours of $p$ is thus found to be $\frac{C_n^{\frac{n}{2}}}{2}+\frac{n^2}{4}$.

\section{Appendix II: Large violation $01$-gadgets from Hardy paradoxes}

In this Appendix, we present a construction of a $01$-gadget in which the two distinguished vectors, denoted here by $A_{4k}^0$ and $B_{4k}^0$ exhibit arbitrary separation as $k \rightarrow \infty$, i.e., with $0 \leq |\langle A_{4k}^) | B_{4k}^0  \rangle| \leq 1$ as $k \rightarrow \infty$. The construction makes use of a version of the ladder proof of Hardy non-locality from \cite{Cab5} and a one-to-one correspondence between Hardy non-locality and $01$-gadget contextuality shown in the main text. The ladder proof in \cite{Cab5} shows a version of Hardy non-locality for the maximally entangled state of two spin-$1$ particles. By the Proposition \ref{prop:Hardy-gadget1} and \ref{prop:Hardy-gadget2} shown in the main text, the projectors measured by the two parties in the Hardy paradox can be converted into a $01$-gadget proof of contextuality for a single spin-$1$ particle.

The $01$-gadget construction is shown in Fig. \ref{fig:gadg-ladder} and the optimal vectors are given as:

$$ A_{4k}^0=\left(\begin{array}{c}
1 \\
0 \\
-1
\end{array}\right)\qquad \qquad 
B_{4k}^0=\left(\begin{array}{c}
-\cos 2 \phi+i \sin 2 \phi \\
0 \\
1
\end{array}\right)
$$

$$
A_{4k-1}^0=\left(\begin{array}{c}
e^{-i2\phi} \\
\sqrt{2} \cot \theta_1 \cos^2\phi(\tan\phi+i) \\
1
\end{array}\right)\qquad \qquad 
B_{4k-1}^0=\left(\begin{array}{c}
1\\
-i \sqrt{2} \tan\theta_1 \\
1
\end{array}\right)$$

\begin{figure*}[t] 
\label{fig:gadg-ladder}
\centering
\includegraphics[width=18cm]{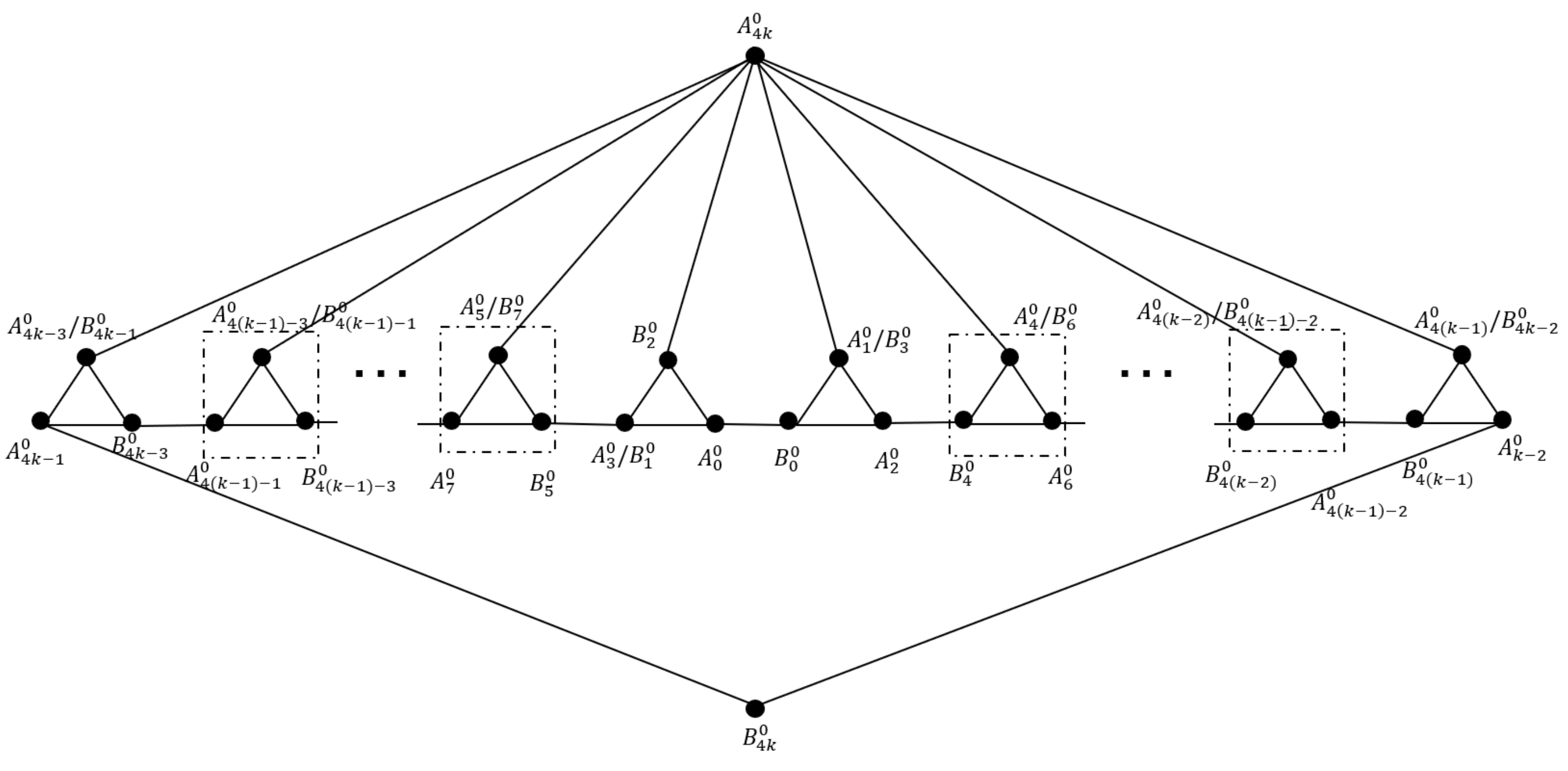}
 \caption{\label{4} The construction of a $01$-gadget where the angle between the two distinguished vectors $A_{4k}^0$ and $B_{4k}^0$ can be made arbitrary as $k \rightarrow \infty$. The construction makes use of a version of the ladder proof of Hardy non-locality from \cite{Cab5} and a one-to-one correspondence between Hardy non-locality and $01$-gadget contextuality shown in the main text.}
\end{figure*}

$$
A_{4k-2}^0=\left(\begin{array}{c}
e^{-i2\phi} \\
-\sqrt{2} \cot \theta_1 \cos^2\phi(\tan\phi+i) \\
1
\end{array}\right)\qquad \qquad 
B_{4k-2}^0=\left(\begin{array}{c}
1\\
i \sqrt{2} \tan\theta_1 \\
1
\end{array}\right)$$

$$
A_{4k-3}^0=B_{4k-1}^0\qquad \qquad 
B_{4k-3}^0=\left(\begin{array}{c}
c_{1}^{2} \sin ^{2} \theta_{1}-\cot ^{2} \phi+i2 c_{1} \sin \theta_{1} \cot \phi \\
-\sqrt{2} c_{1} \cot \phi \cos \theta_{1}+i \sqrt{2} c_{1}^{2} \sin \theta_{1} \cos \theta_{1} \\
\cot ^{2} \phi +c_{1}^{2} \sin ^{2} \theta_{1}
\end{array}\right)$$

$$
A_{4(k-1)}^0=B_{4k-2}^0\qquad \qquad 
B_{4(k-1)}^0=\left(\begin{array}{c}
c_{1}^{2} \sin ^{2} \theta_{1}-\cot ^{2} \phi+i2 c_{1} \sin \theta_{1} \cot \phi \\
\sqrt{2} c_{1} \cot \phi \cos \theta_{1}-i \sqrt{2} c_{1}^{2} \sin \theta_{1} \cos \theta_{1} \\
\cot ^{2} \phi +c_{1}^{2} \sin ^{2} \theta_{1}
\end{array}\right)$$

$$
A_{4k-5}^0=\left(\begin{array}{c}
c_{2}^{-2} \sin ^{2} \theta_{2}-\tan ^{2} \phi-i 2 c_{2}^{-1} \sin \theta_{2}\tan\phi \\
\sqrt{2} c_{2}^{-1} \tan\phi\cos \theta _{2}+i \sqrt{2} c_{2}^{-2} \sin \theta_{2} \cos \theta_{2}  \\
\tan ^{2} \phi+c_{2}^{-2} \sin ^{2} \theta_{2}
\end{array}\right) \qquad \qquad 
B_{4k-5}^0= \left(\begin{array}{c}
1\\
-i \sqrt{2} \tan\theta_2 \\
1
\end{array}\right)$$

$$
A_{4k-6}^0=\left(\begin{array}{c}
c_{2}^{-2} \sin ^{2} \theta_{2}-\tan ^{2} \phi-i 2 c_{2}^{-1} \sin \theta_{2}\tan\phi \\
-\sqrt{2} c_{2}^{-1} \tan\phi\cos \theta _{2}-i \sqrt{2} c_{2}^{-2} \sin \theta_{2} \cos \theta_{2}  \\
\tan ^{2} \phi+c_{2}^{-2} \sin ^{2} \theta_{2}
\end{array}\right) \qquad \qquad 
B_{4k-6}^0= \left(\begin{array}{c}
1\\
i \sqrt{2} \tan\theta_2 \\
1
\end{array}\right)$$

$$
A_{4k-7}^0=B_{4k-5}^0\qquad \qquad 
B_{4k-7}^0=\left(\begin{array}{c}
c_{2}^{2} \sin ^{2} \theta_{2}-\cot ^{2} \phi+i2 c_{2} \sin \theta_{2} \cot \phi \\
-\sqrt{2} c_{2} \cot \phi \cos \theta_{2}+i \sqrt{2} c_{2}^{2} \sin \theta_{2} \cos \theta_{2} \\
\cot ^{2} \phi +c_{2}^{2} \sin ^{2} \theta_{2}
\end{array}\right)$$

$$
A_{4(k-2)}^0=B_{4k-6}^0\qquad \qquad 
B_{4(k-2)}^0=\left(\begin{array}{c}
c_{2}^{2} \sin ^{2} \theta_{2}-\cot ^{2} \phi+i2 c_{2} \sin \theta_{2} \cot \phi \\
\sqrt{2} c_{2} \cot \phi \cos \theta_{2}-i \sqrt{2} c_{2}^{2} \sin \theta_{2} \cos \theta_{2} \\
\cot ^{2} \phi +c_{2}^{2} \sin ^{2} \theta_{2}
\end{array}\right)$$

$$
\vdots \qquad \qquad \qquad \qquad \qquad \qquad \qquad \qquad \vdots $$

$$
A_7^0=\left(\begin{array}{c}
c_{k-1}^{-2} \sin ^{2} \theta_{k-1}-\tan ^{2} \phi-i 2 c_{k-1}^{-1} \sin \theta_{k-1}\tan\phi \\
\sqrt{2} c_{k-1}^{-1} \tan\phi\cos \theta _{k-1}+i \sqrt{2} c_{k-1}^{-2} \sin \theta_{k-1} \cos \theta_{k-1}  \\
\tan ^{2} \phi+c_{k-1}^{-2} \sin ^{2} \theta_{k-1}
\end{array}\right) \qquad \qquad 
B_7^0= \left(\begin{array}{c}
1\\
-i \sqrt{2} \tan\theta_{k-1} \\
1
\end{array}\right)$$

$$
A_6^0=\left(\begin{array}{c}
c_{k-1}^{-2} \sin ^{2} \theta_{k-1}-\tan ^{2} \phi-i 2 c_{k-1}^{-1} \sin \theta_{k-1}\tan\phi \\
-\sqrt{2} c_{k-1}^{-1} \tan\phi\cos \theta _{k-1}-i \sqrt{2} c_{k-1}^{-2} \sin \theta_{k-1} \cos \theta_{k-1}  \\
\tan ^{2} \phi+c_{k-1}^{-2} \sin ^{2} \theta_{k-1}
\end{array}\right) \qquad \qquad 
B_6^0= \left(\begin{array}{c}
1\\
i \sqrt{2} \tan\theta_{k-1} \\
1
\end{array}\right)$$

$$
A_5^0=B_7^0\qquad \qquad 
B_5^0=\left(\begin{array}{c}
c_{k-1}^{2} \sin ^{2} \theta_{k-1}-\cot ^{2} \phi+i2 c_{k-1} \sin \theta_{k-1} \cot \phi \\
-\sqrt{2} c_{k-1} \cot \phi \cos \theta_{k-1}+i \sqrt{2} c_{k-1}^{2} \sin \theta_{k-1} \cos \theta_{k-1} \\
\cot ^{2} \phi +c_{k-1}^{2} \sin ^{2} \theta_{k-1}
\end{array}\right)$$

$$
A_4^0=B_6^0\qquad \qquad 
B_4^0=\left(\begin{array}{c}
c_{k-1}^{2} \sin ^{2} \theta_{k-1}-\cot ^{2} \phi+i2 c_{k-1} \sin \theta_{k-1} \cot \phi \\
\sqrt{2} c_{k-1} \cot \phi \cos \theta_{k-1}-i \sqrt{2} c_{k-1}^{2} \sin \theta_{k-1} \cos \theta_{k-1} \\
\cot ^{2} \phi +c_{k-1}^{2} \sin ^{2} \theta_{k-1}
\end{array}\right)$$

$$
A_3^0=\left(\begin{array}{c}
c_{k}^{-2} \sin ^{2} \theta_{k}-\tan ^{2} \phi-i 2 c_{k}^{-1} \sin \theta_{k}\tan\phi \\
\sqrt{2} c_{k}^{-1} \tan\phi\cos \theta _{k}+i \sqrt{2} c_{k}^{-2} \sin \theta_{k} \cos \theta_{k}  \\
\tan ^{2} \phi+c_{k}^{-2} \sin ^{2} \theta_{k}
\end{array}\right) \qquad \qquad 
B_3^0= \left(\begin{array}{c}
1\\
i \sqrt{2} \tan\theta_k \\
1
\end{array}\right)$$

$$
A_2^0=\left(\begin{array}{c}
c_{k}^{-2} \sin ^{2} \theta_{k}-\tan ^{2} \phi-i 2 c_{k}^{-1} \sin \theta_{k}\tan\phi \\
-\sqrt{2} c_{k}^{-1} \tan\phi\cos \theta _{k}-i \sqrt{2} c_{k}^{-2} \sin \theta_{k} \cos \theta_{k}  \\
\tan ^{2} \phi+c_{k}^{-2} \sin ^{2} \theta_{k}
\end{array}\right) \qquad \qquad 
B_2^0= \left(\begin{array}{c}
1\\
-i \sqrt{2} \tan\theta_k \\
1
\end{array}\right)$$

$$
A_1^0=B_3^0\qquad \qquad 
B_1^0=A_3^0
$$

$$
A_0^0=\left(\begin{array}{c}
c_{k}^{2} \sin ^{2} \theta_{k}-\cot ^{2} \phi+i2 c_{k} \sin \theta_{k} \cot \phi \\
-\sqrt{2} c_{k} \cot \phi \cos \theta_{k}+i \sqrt{2} c_{k}^{2} \sin \theta_{k} \cos \theta_{k} \\
\cot ^{2} \phi +c_{k}^{2} \sin ^{2} \theta_{k}
\end{array}\right) \qquad \qquad 
B_0^0=\left(\begin{array}{c}
c_{k}^{2} \sin ^{2} \theta_{k}-\cot ^{2} \phi+i2 c_{k} \sin \theta_{k} \cot \phi \\
\sqrt{2} c_{k} \cot \phi \cos \theta_{k}-i \sqrt{2} c_{k}^{2} \sin \theta_{k} \cos \theta_{k} \\
\cot ^{2} \phi +c_{k}^{2} \sin ^{2} \theta_{k}
\end{array}\right)
$$


In the ladder proof, $|\psi\rangle$ is the maximally entangled state. And $P_{\psi}\left(A_{4k}, B_{4k}\right)=\frac{1}{3} \cos^{2} \phi$. $P_{\psi}\left(A_{0}, B_{0}\right)=0$, if $\cot^{2}\phi=c_{k}^{2}\left(\cos ^{2} \theta_{k}-\sin ^{2} \theta_{k}\right)$. The coefficients $c_j$ are: $c_{1}=\sin \theta_{1}$, $c_{j+1}=c_{j} \cos \left(\theta_{j+1}-\theta_{j}\right)$ or $c_{j}=\sin \theta_{1} \prod_{k=1}^{j-1} \cos \left(\theta_{k+1}-\theta_{k}\right)$.
\end{widetext}

\end{document}